\xpatchcmd{\ps@firstpagestyle}{Manuscript submitted to ACM}{}{\typeout{First patch succeeded}}{\typeout{first patch failed}}
\xpatchcmd{\ps@standardpagestyle}{Manuscript submitted to ACM}{}{\typeout{Second patch succeeded}}{\typeout{Second patch failed}}    \@ACM@manuscriptfalse
\renewcommand\footnotetextcopyrightpermission[1]{}
\definecolor{light-gray}{gray}{0.8}
\newtheorem{remark}{Remark}
\newcommand{\ghl}[1]{#1}
\newcommand{\phl}[1]{#1}
\newcommand{\thl}[1]{#1}
\newcommand{\nohl}[1]{#1}
\newcommand{\substone}[2]{\{#1/#2\}}
\newcommand{\substtwo}[4]{\{#1/#2, #3/#4\}}
\newcommand{\m}[1]{\mathsf{#1}}
\newcommand{\til}[1]{\overline{#1}}
\newcommand{\prf}[1]{\mathcal{#1}}
\newcommand{\bang}{\mathord{!}}
\newcommand{\query}{\mathord{?}}
\newcommand{\sendbase}[3]{#1.(#2 \pp #3)}
\newcommand{\closebase}[1]{#1}
\newcommand{\srvbase}[2]{\bang #1.#2}
\newcommand{\recvtypebase}[2]{#1.#2}
\newcommand{\Casebase}[2]{\nohl{{#1}.\m{case}(#2)}}
\newcommand{\wait}[2]{\nohl{#1().#2}}
\newcommand{\close}[1]{\nohl{\closebase{{#1}[]}}}
\newcommand{\send}[4]{\nohl{\sendbase{{#1}[#2]}{#3}{#4}}}
\newcommand{\recv}[3]{\nohl{{#1}(#2).#3}}
\newcommand{\recvs}[2]{\nohl{{#1}(#2)}}
\newcommand{\piCase}[3]{\nohl{\Casebase{#1}{#2,#3}}}
\newcommand{\emptyCase}[1]{\nohl{\Casebase{#1}{}}}
\newcommand{\inl}[2]{\nohl{{#1}[\m{inl}].#2}}
\newcommand{\inr}[2]{\nohl{{#1}[\m{inr}].#2}}
\newcommand{\sendtype}[3]{\nohl{{#1}[#2].#3}}
\newcommand{\recvtype}[3]{\nohl{\recvtypebase{{#1}(#2)}{#3}}}
\newcommand{\srv}[3]{\nohl{\srvbase{#1(#2)}{#3}}}
\newcommand{\client}[3]{\nohl{\query #1[#2].#3}}
\newcommand{\pp}{\nohl{\ \;\boldsymbol{|}\ \;}}
\newcommand{\res}[1]{(\boldsymbol\nu #1)\,}
\newcommand{\seq}{\vdash}
\newcommand{\gseq}{\vDash}
\newcommand{\dual}[1]{#1^{\bot}}
\newcommand{\tensor}{\otimes}
\newcommand{\tensort}[1]{\otimes^{\phl{#1}}}
\newcommand{\parr}{\mathbin{\bindnasrepma}}
\newcommand{\parrt}[1]{\parr^{\phl{#1}}}
\newcommand{\with}{\mathbin{\binampersand}}
\newcommand{\witht}[1]{{\binampersand}^{\phl{#1}}}
\newcommand{\one}{1}
\newcommand{\zero}{0}
\newcommand{\gentensor}[3]{#1\tensort{#2} #3}
\newcommand{\genparr}[3]{#1\parrt{#2} #3}
\newcommand{\genoplus}[3]{#1\oplust{#2} #3}
\newcommand{\genwith}[3]{#1\witht{#2} #3}
\newcommand{\oplust}[1]{\oplus^{\phl{#1}}}
\newcommand{\reducesto}{\longrightarrow}
\newcommand{\ba}{\begin{array}}
\newcommand{\ea}{\end{array}}
\newcommand{\rname}[1]{\ensuremath{\textsc{#1}}}
\newcommand{\fn}{\m{fn}}
\newcommand{\defeq}{\stackrel{\mbox{\scriptsize def}}{=}}
\newenvironment{equations}{\[\ba{@{}r@{~}c@{~}l@{}}}{\ea\]\ignorespacesafterend}
\newenvironment{equationsl}{\[\ba{@{}r@{~}c@{~}l@{}l@{}}}{\ea\]\ignorespacesafterend}
\newcommand{\gto}{\to}
\newcommand{\gfromto}[2]{#1 \gto #2}
\newcommand{\parrtensor}[4]{\gfromto{#1}{#2}(#3).#4}
\newcommand{\botone}[2]{\closebase{\gfromto{#1}{#2}}}
\newcommand{\withplus}[4]{\piCase{\gfromto{#1}{#2}}{#3}{#4}}
\newcommand{\topzero}[2]{\emptyCase{\gfromto{#1}{#2}}}
\newcommand{\bangquery}[3]{\bang \gfromto{#1}{#2}(#3)}
\newcommand{\forallexists}[4]{\recvtypebase{\gfromto{#2}{#3}}{(#1)#4}}
\newcommand{\provideassume}[3]{\gfromto{#1}{#2}(#3)}
\newcommand{\cpaxiom}{\localaxiom}
\newcommand{\localaxiom}[4]{\nohl{{#1} \gto {#3}^{#4}}}
\newcommand{\globalaxiom}[4]{\nohl{{#1}^{#2} \gto {#3}}}
\newcommand{\cpres}[4]{\nohl{\res{{#1}^{#2}{#3}}}}
\newcommand{\gcpres}[3]{\nohl{\res{#1 : #2}(#3)}}
\newcommand{\CP}{CP\xspace}
\newcommand{\CHOP}{CHOP\xspace}
\newcommand{\HOpi}{HO$\pi$\xspace}
\newcommand{\CLL}{CLL\xspace}
\newcommand{\assume}[1]{\ensuremath{\langle #1 \rangle}}
\newcommand{\tassume}[1]
{\langle #1 \rangle}
\newcommand{\provide}[1]{[#1]}
\newcommand{\sendho}[2]{#1[#2]}
\newcommand{\letp}[2]{[#1 \texttt{:=}\, #2]}
\newcommand{\judge}[3]{#1 \seq #2 :: #3}
\newcommand{\vbold}[1]{\bm{#1}}
\newcommand{\sendhowc}[3]{#1 \vbold{[} #2 \vbold{]} . #3}
\newcommand{\recvhowc}[3]{#1 \vbold{(} #2 \vbold{)} . #3}
\newcommand{\invoke}[2]{#1 \langle #2 \rangle}
\newcommand{\sendfrees}[2]{#1 \langle #2 \rangle}
\newcommand{\sendfree}[3]{#1 \langle #2 \rangle.#3}
\newcommand{\abstr}[2]{(#1)#2}
\renewcommand{\cite}{\citep}
\begin{document}

\title[Classical Higher-Order Processes]{Classical Higher-Order Processes}
\subtitle{(Technical Report)}

\author{Fabrizio Montesi}
\orcid{0000-0003-4666-901X}
\affiliation{%
  \institution{University of Southern Denmark}
  \department{Department of Mathematics and Computer Science}
  \streetaddress{Campusvej 55}
  \city{Odense M}
  \postcode{5230}
  \country{Denmark}}
\email{fmontesi@imada.sdu.dk}

\begin{abstract}
Classical Processes (\CP) is a calculus where the proof theory of classical linear logic types communicating processes with mobile channels, \`a la $\pi$-calculus. Its construction builds on a recent propositions as types correspondence between session types and propositions in linear logic.
Desirable properties such as type preservation under reductions and progress come for free from the metatheory of linear logic.

We contribute to this research line by extending \CP with code mobility. We generalise classical linear logic to capture higher-order (linear) reasoning on proofs, which yields a logical reconstruction of (a variant of) the Higher-Order $\pi$-calculus (\HOpi).
The resulting calculus is called Classical Higher-Order Processes (\CHOP).
We explore the metatheory of \CHOP, proving that its semantics enjoys type
preservation and progress (terms do not get stuck). We also illustrate the expressivity of \CHOP
through examples, derivable syntax sugar, and an extension to multiparty sessions. Lastly, we
define a
translation from \CHOP to \CP, which encodes mobility of process code into reference passing.
\end{abstract}

\begin{CCSXML}
<ccs2012>
<concept>
<concept_id>10003752.10003753.10003761.10003763</concept_id>
<concept_desc>Theory of computation~Distributed computing models</concept_desc>
<concept_significance>500</concept_significance>
</concept>
<concept>
<concept_id>10003752.10003753.10003761.10003764</concept_id>
<concept_desc>Theory of computation~Process calculi</concept_desc>
<concept_significance>500</concept_significance>
</concept>
<concept>
<concept_id>10003752.10003790.10003801</concept_id>
<concept_desc>Theory of computation~Linear logic</concept_desc>
<concept_significance>500</concept_significance>
</concept>
<concept>
<concept_id>10011007.10011006.10011008.10011024.10011034</concept_id>
<concept_desc>Software and its engineering~Concurrent programming structures</concept_desc>
<concept_significance>300</concept_significance>
</concept>
</ccs2012>
\end{CCSXML}

\ccsdesc[500]{Theory of computation~Distributed computing models}
\ccsdesc[500]{Theory of computation~Process calculi}
\ccsdesc[500]{Theory of computation~Linear logic}
\ccsdesc[300]{Software and its engineering~Concurrent programming structures}

\keywords{Higher-order Processes, Session Types, Linear Logic, Propositions as Types}

\thanks{This work was partly supported by Independent Research Fund Denmark, grant no.\
DFF--4005-00304, DFF--1323-00247 and DFF--7014-00041, and by the Open Data Framework project at the
University of Southern Denmark.}

\maketitle

\section{Introduction}
\label{sec:introduction}

Session types \citep{HVK98} define protocols that discipline communications among concurrent
processes, typically given in terms of (variants of) the $\pi$-calculus \citep{MPW92}.
In their seminal paper, \citet{CP10} established a Curry-Howard correspondence between the type
theory of sessions and intuitionistic linear logic, where processes correspond to proofs, session
types to propositions, and communication to cut elimination.
Important properties that are usually proven with additional machinery on top of session types,
like progress, come for free from the
properties of linear logic, like cut elimination.
\citet{W14} revisited the correspondence for Classical Linear
Logic (\CLL), developing the calculus of Classical Processes (\CP). \CP enjoys a stricter
correspondence with linear logic, and branched off a new line of development wrt the
$\pi$-calculus: \CP foregoes the standard semantics of $\pi$-calculus, and instead adopts a semantics that is
dictacted by standard proof transformations found in linear logic.
Therefore, \CP serves two important purposes: it shows how to use linear logic ``directly'' as a
solid foundation for protocol-driven concurrent programming; and it is a convenient model to
explore extensions of linear logic that are interesting in the setting of process models.

In this article, we extend this research line to capture code mobility---the capability to
communicate processes---by logically reconstructing the generalisation of the $\pi$-calculus into
the Higher-Order $\pi$-calculus (\HOpi) \citep{S93}.
Mobile code is widespread. For example, it is a cornerstone of mobile application stores and cloud
computing.
Disciplined programming in these settings is key, since a host may have to run arbitrary code
received from a third party, and such code does might then interact with the host itself and its
environment. This motivates the search for solid foundations for disciplining the programming of
mobile code, as we may hope to do with linear logic.

\paragraph{Outline of development and contributions}
We develop the calculus of Classical Higher-Order Processes (\CHOP), a strict generalisation of \CP that captures process
mobility (\CP is a fragment of \CHOP). \CHOP extends linear logic to higher-order reasoning, by allowing proofs to have
``holes'': a proof may just assume a sequent to be provable and use it as a premise.
Actual proofs that resolve these assumptions can then be provided separately, using a higher-order
version of the cut rule. Providing a proof and assuming provability are respectively typed by two
new kinds of types, $\provide{\Gamma}$ and $\assume{\Gamma}$, where
$\Gamma$ is a typing context.
The rules that we introduce for our new ingredients type the key primitives that elevate \CP to a
higher-order process calculus: send a process, receive a process, store a process in a
variable, and run the process stored in a variable.

Differently from previous higher-order process calculi, \CHOP treats process variables linearly
(following the intuition that it is a higher-order linear logic): a process sent over a channel is
guaranteed to be eventually used exactly once---the receiver may delegate this responsibility
further on to another process. Higher-order types can be composed using the
standard connectives of linear logic, so that we can apply the usual expressivity
of linear logic propositions to this new setting---e.g., multiple assumptions $\assume{\Gamma}
\tensor \assume{\Gamma'}$ or explicit weakening $\query{\assume{\Gamma}}$.

Our higher-order cuts (called chops) support type-preserving proof transformations that yield a
semantics of communication and explicit substitutions (as originally developed by \citet{ACCL91} for the $\lambda$-calculus) for higher-order processes.
Our semantics suggests how the substitutions that should be performed at the receiver as a result
of code mobility can be implemented efficiently. We show that the existing semantics for
link terms in \CP---which bridge two channels and enable behavioural
polymorphism---becomes unsound in the higher-order setting, and then formulate a sound
semantics by generalising $\eta$-expansion in \CP to higher-order I/O. That $\eta$-expansion works as expected
for our new higher-order types also provides evidence of the correct formulation of our new rules.

We investigate the metatheory of \CHOP, showing that our semantics
preserves types and that  processes enjoy progress (communications between communicating processes 
never get stuck).

The design of \CHOP is intentionally minimalistic, since it is intended as a foundational calculus. We illustrate more sophisticated features as extensions that build on top of the basic theory of \CHOP. Specifically, we discuss how to implement higher-order I/O primitives with continuations and multiparty sessions.

We conclude our development by presenting a translation of \CHOP into \CP, which preserves types (up to encoding of higher-order types into channel types of \CP). The translation works by using channel mobility to simulate code mobility, recalling that presented by \citet{S93}.

\paragraph{Structure of the paper}
\Cref{sec:preview} provides an introduction to (the multiplicative fragment of) \CP and an overview of the key insights behind its generalisation to \CHOP.
\Cref{sec:chop} presents \CHOP and a running example that covers its syntax, typing, and semantics.
In \cref{sec:meta}, we study the metatheory of \CHOP.
In \cref{sec:extensions}, we illustrate how to extend \CHOP with syntax sugar and the support for
multiparty sessions.
The translation of \CHOP to \CP is presented in \cref{sec:translation}. We discuss related work in
\cref{sec:related} and present conclusions in \cref{sec:conclusions}.

\paragraph{Differences from previously published material}
A short conference paper describing preliminary work in progress on the development of \CHOP has been previously published~\cite{M17}. That paper presented only a draft of the typing rules for process mobility, which have since been changed (although they are similar, the old ones do not support the
properties that we present here). Aside from motivation and general idea, the present report is thus almost entirely new. Specifically, it includes the following improvements: a more detailed presentation in sections \ref{sec:introduction} and \ref{sec:conclusions}; full definitions (in \cref{sec:chop});
all examples are new (e.g., in \cref{sec:chop}); \cref{sec:meta}, which reports on properties and
proofs, is new; \cref{sec:extensions}, covering extensions, is new; \cref{sec:translation}, the translation from \CHOP to \CP, is new.

\section{Overview: From \CP to \CHOP}
\label{sec:preview}

We give an overview of some basic notions of \CP that we are going to use in this work, and of the key insights behind our technical development.

\subsection{Multiplicative \CP}

First, we recall the multiplicative fragment of the calculus of Classical Processes (\CP), according to its latest version by \citet{CLMSW16}.

\paragraph{Processes and Typing}
In \CP, processes communicate by using channels, which represent endpoints of sessions.
Let $x,y,\ldots$ range over channel names (also called channels, for brevity). Then, the syntax of
processes ($P,Q,R,\ldots$) is the following.
\begin{displaymath}
\begin{array}{rll}
P,Q,R ::=
& \send xyPQ & \text{output a channel} \\
& \recv xyP & \text{input a channel} \\
& \close x & \text{close} \\
& \wait xP & \text{wait} \\
& \cpres{x}{A}{y}{}(P \pp Q) & \text{parallel composition}
\end{array}
\end{displaymath}
An output term $\send xyPQ$ sends a fresh name $y$ over $x$, and then proceeds as the parallel
composition of $P$ and $Q$. Dually, an input term $\recv xyP$ receives a name $y$ on $x$ and then
proceeds as $P$. Names that are sent or received are bound ($y$ in our syntax for output and
input). Term
$\close x$ closes endpoint $x$, and term $\wait xP$ waits for the closing of $x$ and then proceeds
as $P$.
In general, in \CP, square brackets indicate output and round brackets indicate input.
The parallel composition term $\cpres{x}{A}{y}{}(P \pp Q)$ connects the endpoints $x$ at process
$P$ and
$y$ at process $Q$ to form a session, enabling the two processes to communicate. The names $x$ and
$y$ are
bound. The type $A$ describes the behaviour that the processes will follow when communicating.

The types ($A,B,C,\ldots$) of multiplicative \CP are exactly the propositions of multiplicative
classical
linear logic~\cite{G87}. A type abstracts how a channel is used by a process. We recall them below,
together with a description of their interpretation as behavioural types for channels in \CP.
\begin{displaymath}
\begin{array}{rll}
A,B,C ::=
& \gentensor{A}{}{B} & \text{send $A$, proceed as $B$} \\
& \genparr{A}{}{B} & \text{receive $A$, proceed as $B$} \\
& \one & \text{unit for $\tensor$} \\
& \bot & \text{unit for $\parr$}
\end{array}
\end{displaymath}
The notion of duality of linear logic is used in \CP to establish whether two types are compatible.
Formally, we write $\dual A$ for the dual of type $A$. Duality is defined inductively as follows.
\begin{displaymath}
\begin{array}{rcl}
\dual{(A\tensor B)} & = & \dual A \parr \dual B \\
\dual{(A \parr B)} & = & \dual A \tensor \dual B \\
\dual \one & = & \bot \\
\dual \bot & = & \one
\end{array}
\end{displaymath}

A process may use arbitrarily many channels. Therefore, the types of the channels used by a
process are collected in a typing environment ($\Gamma,\Delta,\Sigma,\ldots$). Formally, a typing
environment associates channel names to types, e.g., $\Gamma = x_1:A_1, \ldots, x_n:A_n$ for some natural number
$n$, where all $x_i$ in $\Gamma$ are distinct. Order in environments is ignored (we silently allow for
exchange). Environments can be combined when they do not share names, written $\Gamma,\Delta$.

Using types (and environments), we can type programs in multiplicative \CP (processes) by using the
typing rules given in \cref{fig:mult_cp_rules}.\footnote{The original presentation of \CP has an
equivalent, but slightly different, syntax for typing judgements: $P \seq \Gamma$ instead of our 
$\judge{\ }{P}{\Gamma}$. Our syntax is 
convenient for the technical development that comes in the next sections, since we will add a separate higher-order context ($\Theta$) on the left: $\judge{\Theta}{P}{\Gamma}$.} In the rules, a typing
judgement $\judge{\ }{P}{x_1:A_1,\ldots,x_n:A_n}$ reads ``process $P$ implements behaviour $A_i$ on
each channel $x_i$''.
\begin{figure}[t]
  \begin{displaymath}
    \begin{array}{c}
      \infer[\tensor]
      {
        \judge{}{\phl{\send xyPQ}}
        {\thl{\Gamma}, \thl{\Delta}, \phl x:\thl{A \tensor B}}
      }
      {
        \judge{}{\phl P}
        {\thl{\Gamma}, \phl y:\thl A}
        &
        \judge{}{\phl Q}
        {\thl \Delta, \phl x: \thl B}
      }
      \qquad
      \infer[\parr]
      {\judge{}{\phl{\recv xyP}}
        {\thl{\Gamma}, \phl x: \thl {A \parr B}}
      }
      {\judge{}{\phl P}{\thl{\Gamma}, \phl y:\thl A, \phl x:\thl B}}
	\\\\
      \infer[\one]
      {\judge{}{\phl{\close x}}{\phl x: \thl\one}}
      { }
      \qquad
      \infer[\bot]
      {\judge{}{\phl {\wait xP}}{\thl \Gamma, \phl x:\thl\bot}}
      {\judge{}{\phl P}{\thl\Gamma}}
      \qquad
            \infer[\rname{Cut}]
      {\judge{}{\phl{\cpres{x}A{y}{A^\perp}(P \pp Q)}}{\thl\Gamma,\thl\Delta}}
      {\judge{}{\phl P}{\thl\Gamma, \phl x:\thl A} & \judge{}{\phl Q}{\thl\Delta,\phl
y:\thl{\dual{A}}}}
      \end{array}
  \end{displaymath}
  \caption{Multiplicative \CP, Typing Rules.}
  \label{fig:mult_cp_rules}
\end{figure}
The rules are exactly those from multiplicative classical linear logic, and enforce linear usage of
names. Rule $\tensor$ types an output $\send xyPQ$ as $A\tensor B$ by checking that the
continuation $P$
will implement behaviour $A$ on $y$ and behaviour $B$ on $x$. Dually, rule $\parr$ types an input
$\recv
xyP$ by checking that the continuation $P$ implements behaviour $A$ on $y$ and behaviour $B$ on
$x$.
Rules $\one$ and $\bot$ are straightforward. Rule $\rname{Cut}$ types a parallel composition
$\cpres xAy{}
(P\pp Q)$ by using duality of types. Specifically, we check that the processes $P$ and $Q$
implement dual behaviour on the respective channels $x$ and $y$.

\paragraph{Semantics and Properties}
The culmination of \CP is the semantics that we obtain from its definition. Specifically, we
can reuse the metatheory of cut elimination for linear logic in order to get a semantics for processes
with nontrivial properties. Consider the following example of an application of $\rname{Cut}$.
\[
\infer[\rname{Cut}]
{
\judge{}{
\cpres{x}{A \tensor B}{y}{}
\left( \send x{x'}PQ
\pp \recv y{y'}R \right)
}{\Gamma,\Gamma',\Delta}
}
{
\infer[\tensor]
      {
        \judge{}{\phl{\send x{x'}PQ}}
        {\thl{\Gamma}, \thl{\Gamma'}, \phl x:\thl{A \tensor B}}
      }
      {
        \judge{}{\phl P}
        {\thl{\Gamma}, \phl {x'}:\thl A}
        &
        \judge{}{\phl Q}
        {\thl \Gamma', \phl x: \thl B}
      }
&
      \infer[\parr]
      {\judge{}{\phl{\recv y{y'}R}}
        {\thl{\Delta}, \phl {y}: \thl {\dual A \parr \dual B}}
      }
      {\judge{}{\phl R}{\thl{\Delta}, \phl {y'}:\dual A, \phl y:\dual B}}
}
\]
The definitions of $P$, $Q$, $R$, $\Gamma$, $\Delta$, $A$, and $B$ are irrelevant for this example: the following reasoning holds in general, assuming that the premises $\judge{}{\phl P}
        {\thl{\Gamma}, \phl {x'}:\thl A}$, $\judge{}{\phl Q}
        {\thl \Gamma', \phl x: \thl B}$, and $\judge{}{\phl R}{\thl{\Delta}, \phl {y'}:\dual A, \phl
y:\dual B}$ are derivable.
The reader familiar with linear logic may have recognised that this example is a case of the proof
of cut
elimination. In general, cuts can always be eliminated from linear logic proofs, by following
an inductive proof-rewriting procedure. \CP inherits this property.
Here, we can rewrite the proof above to use cuts on smaller types, eliminating the $\tensor$
connective.
\begin{displaymath}
\infer[\rname{Cut}]
{
\judge{}{\cpres{x'}{A}{y'}{}(P \pp\ \cpres{x}{B}{y}{} (Q \pp R)\ )}
{\Gamma,\Gamma',\Delta}
}{
\judge{}{\phl P}
        {\thl{\Gamma}, \phl {x'}:\thl A}
&
\infer[\rname{Cut}]
{\judge{}{\cpres{x}{B}{y}{} (Q \pp R)}{\Gamma',\Delta, y':\dual A}}
{
	\judge{}{\phl Q}
        {\thl \Gamma', \phl x: \thl B}
	&
		\judge{}{\phl R}{\thl{\Delta}, \phl {y'}:\dual A, \phl y:\dual B}
}
}
\end{displaymath}
This is called a cut reduction.
Observe what happened to the initial process. We have rewritten it as follows.
\begin{equations}
	\phl{\cpres{x}{A\tensor B}{y}{\dual A\parr \dual B}
        (\send x{x'}PQ  \pp \recv y{y'}R)}
        &\reducesto&
        \phl{\cpres{x'}{A}{y'}{\dual A}
        (P  \pp \cpres xBy{\dual B} (Q\pp R))}
\end{equations}

This reduction rule executes a communication between two compatible processes composed in
parallel. The remarkable aspect of \CP is that we obtained the reduction from something that we 
already know from linear logic, which is the proof transformation that we discussed: we just need 
to observe what happens
to the processes. Furthermore, it is evident from the construction of the transformation that it
preserves types. All rules for the operational semantics of \CP are built by following this method.
We report here the principal reductions for multiplicative \CP, and the rules for allowing
reductions
inside of cuts.
\begin{equationsl}
	\phl{\cpres{x}{A\tensor B}{y}{\dual A\parr \dual B}
        (\send x{x'}PQ  \pp \recv y{y'}R)}
        &\reducesto &
        \phl{\cpres{x'}{A}{y'}{\dual A}
        (P  \pp \cpres xBy{\dual B} (Q\pp R))}\\
        \phl{\cpres x{\one}y\perp(\close x\pp \wait yP)}
        &\reducesto &
        \phl{P}\\
    \cpres{x}{A}{y}{}(P_1 \pp Q)
    & \reducesto & \cpres{x}{A}{y}{}(P_2 \pp Q)
    & \text{if } P_1 \reducesto P_2
    \\
    \cpres{x}{A}{y}{}(P \pp Q_1)
    & \reducesto & \cpres{x}{A}{y}{}(P \pp Q_2)
    & \text{if } Q_1 \reducesto Q_2
\end{equationsl}

The way in which the semantics of \CP is defined gives us two results on a silver platter.
First, all reductions in \CP preserve typing (type preservation), because cut elimination preserves
types. Second, all sessions are eventually completely executed (progress), because all cuts can be
eliminated from proofs.

\subsection{From \CP to \CHOP}
\label{sec:from_cp_to_chop}

We attempt at applying the same method used for the construction of \CP to extend it to process
mobility with higher-order primitives.
More specifically, we need to find out how we can extend the proof theory of \CP with rules that
(i) type new process terms that capture the feature of process mobility, and (ii) enable proof
transformations that yield their expected semantics.

\paragraph{Processes}
What kind of process terms enable process mobility? For this, we can get direct inspiration from
\HOpi. We need three basic terms.
\begin{itemize}
\item A term for sending process code over a channel.
\item A term for receiving process code over a channel and storing it in a process variable.
\item A term for running the process code stored in a variable.
\end{itemize}
We thus extend the syntax of multiplicative \CP as follows, where $p,q,r,\ldots$ range over process
variables.
\begin{displaymath}
\begin{array}{rll}
P,Q,R ::= & \ldots & \\
& \sendho x{\abstr \rho P} & \text{higher-order output} \\
& \recv xpQ & \text{higher-order input} \\
& \invoke p\rho & \text{run}
\end{array}
\end{displaymath}

The higher-order output term $\sendho x{\abstr \rho P}$ sends the \emph{abstraction} $\abstr \rho
P$ over channel $x$. An abstraction is a parameterised process term, enabling the reuse of process
code in different contexts. The concept of abstraction is standard in the literature of \HOpi
\citep{SW01}; here, we apply a slight twist and use \emph{named} formal parameters $\rho$ rather
than just a parameter list, to make the invocation of abstractions not dependent on the order in
which actual parameters are passed. This plays well with the typing contexts of linear logic, used
for typing process terms in \CP, since they are order-independent as well (exchange is allowed).
We use $l$ to range over parameter names. These are constants, and are thus not affected by
$\alpha$-renaming. Formally, $\rho$ is a record that maps parameter names to channels
used inside $P$, i.e., $\rho = \{ l_i = x_i \}_i$. We omit curly brackets for records in the 
remainder when they are clear from the context, e.g., as in $\invoke p {l_1 = x, l_2 = y}$.
Given a record $\rho = \{ l_i = x_i \}_i$, we call the set $\{l_i\}_i$ the preimage of $\rho$ and 
the set $\{x_i\}$ the image of
$\rho$. An abstraction $\abstr \rho P$ binds all names in the image of $\rho$ in $P$. We 
require abstractions to be closed with respect to channels, in the sense that the image of $\rho$ 
needs to be exactly the set of all free channel names in $P$.

Dual to higher-order output, the higher-order input term $\recv xpQ$ receives an abstraction over
channel $x$ and stores it in the process variable $p$, which can be used in the continuation $Q$.

Abstractions stored in a process variable $p$ can be invoked (we also say run) using term
$\invoke p\rho$, where $\rho$ are the actual named parameters to be used by the process.

We purposefully chose a minimalistic syntax for \CHOP, for economy of the calculus. For example, 
our process output term has no continuation, but we will see in \cref{sec:proc_output_wc} that an 
equivalent construct with continuation can be easily given as
syntax sugar. The continuation in the process input term $\recv xpQ$ is necessary for scoping,
because $p$ is bound in $Q$. The syntax of the run term is clearly minimal.

\begin{remark}
The choice of having (named) parameters and abstractions is justified by the desire to use received
processes to implement behaviours. For example, we may want to receive a channel and then use a
process to implement the necessary behaviour on that channel: $\recvs xy.\recvs xp.\invoke p{l=y}$.

A different way of achieving the same result would be to support the communication of
processes with free names (not abstractions), and then allow run terms to dynamically rename free
names of received processes. For example, if we knew from typing that any process received for $p$
above has $z$ as free names, we could write $\recvs xy.\recvs xp.\invoke p{x=y}$.

Such a \emph{dynamic binding} mechanism would make our syntax simpler, since invocations would
simply be $\invoke p\sigma$ ($\sigma$ is a name substitution) and the higher-order output term
would be $\sendho xP$. However, dynamic binding is undesirable in a programming model, since the
scope of free names can change due to higher-order communications. The reader unfamiliar
with dynamic binding may consult \citep[p.~376]{SW01} for a discussion of this issue in \HOpi.
\end{remark}

\paragraph{Typing}
We now move to typing. The most influential change is actually caused by the run term, so we start
from there.
How should we type a term $\invoke p\rho$? The rule should look like the following informal
template (where
$??$ are unknown).
\[
\infer[\rname{}]
{\judge{}{\invoke p\rho}{\ ??}}
{??}
\]
Since $p$ stores a process (abstraction), and processes communicate over channels, we should be
able to type it with some environment $\Gamma$.
But what $\Gamma$ should we use? This depends on what
abstraction is stored in variable $p$. Such way of reasoning immediately calls for the introduction
of a new \emph{higher-order} context, which we denote with $\Theta$, where we store typing 
information on process variables. This
is the usual typing of variables seen in many calculi, but instead of value types, we store process
types.
The type of a process in \CP is evident: it is a context $\Gamma$.
So a higher-order context $\Theta$ associates process variables to process types $\Gamma$, i.e.,
$\Theta = p_1:\Gamma_1,\ldots,p_n:\Gamma_n$. Notice, however, that since each $p_i$ is going to be 
instantiated by an abstraction, we are not typing channels in each $\Gamma_i$, but the names of 
formal parameters ($l$).
Thus we can write the following rule.
\[
	\infer[\rname{Id}] {
		\judge{p:\Gamma}{\invoke p\rho}{\Gamma\rho}
	}{}
\]
Rule \rname{Id} says that if we run $p$ by providing all the necessary
channel parameters (collected in $\rho$) for running the abstraction that it
contains, then we implement exactly the session types that type the code that may be contained
in $p$ (modulo the renaming of the formal parameters performed with $\rho$, hence the 
$\Gamma\rho$). Notice that we use \emph{all} our available ``resources''---the channel names in the 
image of $\rho$---to run $p$, passing them as parameters. This ensures linearity.

Our rule means, however, that we have changed the shape of typing judgements from that in \CP,
since we have added $\Theta$. Luckily, adapting all rules in \CP to the new format is
straightforward: we
just need to distribute $\Theta$ over premises, respecting linearity of process variables. For
example, rule $\rname{Cut}$ is adapted as follows.
\[
\infer[\rname{Cut}]
	{\judge{\Theta,\Theta'}{\phl{\cpres{x}A{y}{A^\perp}(P \pp Q)}}{\thl\Gamma,\thl\Delta}}
	{\judge{\Theta}{\phl P}{\thl\Gamma, \phl x:\thl A} & \judge{\Theta'}{\phl Q}{\thl\Delta,\phl
	y:\thl{\dual{A}}}}
\]
The full set of adapted rules is given in the next section.
We read the judgement $\judge{\Theta}{P}{\Gamma}$ as ``$P$ uses process variables
according to $\Theta$ and channels according to $\Gamma$''. Observe that some channel behaviour may
not be implemented by $P$ itself, but it could instead be delegated to some invocations of process
variables as we exemplified before.

Now that we know how to type process variables, we can move on to typing process inputs and
outputs. Before we do it, though, we need to extend the syntax of types to capture the sending and
receiving of processes, as follows.
\begin{displaymath}
\begin{array}{rll}
A,B,C ::= & \ldots & \\
& \provide{\Gamma} & \text{send process of type $\Gamma$} \\
& \assume{\Gamma} & \text{receive process of type $\Gamma$}
\end{array}
\end{displaymath}

We can now type a process input.
\[
\infer[\rname{\assume{}}]
      {\judge{\Theta}{\recv xpP}{\Gamma, x:\assume{\Delta}}}
      {\judge{\Theta,p:\Delta}{P}{\Gamma}}
\]
Rule $\assume{}$ says that if we receive a process of type $\Delta$ over channel $x$ and
store it in variable $p$, we can use variable $p$ later in the continuation $Q$ assuming that it
implements $\Delta$. The type for the channel $x$ is $\assume{\Delta}$. If we think
in terms of channels, this type means that we receive a process of type $\Delta$. If we
think in terms of proof theory, it means that we are assuming that there exists of a proof 
of $\Delta$.

The rule for typing a process output follows.
\[
\infer[\rname{\provide{}}]
      {\judge{\Theta}{\sendho x{\abstr \rho P}}{x: \provide{\Gamma}}}
      {\judge{\Theta}{P}{\Gamma\rho}}
\]
Rule $\provide{}$ should be self-explanatory by now.
It says that if we send a process of type $P$ over channel $x$ and $P$ implements $\Gamma$,
then $x$ has type $\provide{\Gamma}$. Note that we allow (actually, we ensure, because of
linearity) $P$ to use the process variables available in the context ($\Theta$). This is not
an arbitrary choice made just to increase expressivity of the calculus: we will see in
\cref{sec:chop} that it follows naturally from the notion of $\eta$-expansion of \CLL.

\paragraph{Usage and Semantics}
Intuitively, the type constructors $\provide{}$ and $\assume{}$ are dual of each other: for any
${\Gamma}$, $\assume{\Gamma}$ means that we need a proof of $\Gamma$,
while $\provide{\Gamma}$ means
that we provide such a proof. So we extend duality as follows.
\begin{displaymath}
\begin{array}{rcl}
\dual{\provide{\Gamma}} & = & \assume{\Gamma} \\
\dual{\assume{\Gamma}} & = & \provide{\Gamma}
\end{array}
\end{displaymath}

This allows us to compose processes that make use of the higher-order input and output primitives.
For example, consider the following cut.
\begin{displaymath}
\infer[\rname{Cut}]{
		\judge{\Theta,\Theta'}{\cpres x{\provide{\Delta}}y{} ( \sendho x{\abstr \rho P} \pp
\recv xpQ
)}{\Gamma}
	}{
		\infer[\rname{\provide{}}]{
			\judge{\Theta}{\sendho x{\abstr \rho P}}{x:\provide{\Delta}}
		}{\judge{\Theta}{P}{\Delta\rho}}
	&
	\infer[\rname{\assume{}}]
	{
		\judge{\Theta'}{\recv xpQ}{\Gamma, y:\assume{\Delta}}
	}{
		\judge{\Theta', p: \Delta}{Q}{\Gamma}
	}
	}
\end{displaymath}

Let us now consider what the operational semantics of term $\cpres x{\provide{\Delta}}y{} (
\sendho x{\abstr \rho P} \pp
\recv xpQ
)$ should be. Intuitively, we would expect
it to reduce to something
like $Q$
with each occurrence of $p$ replaced by $P$. Indeed, in \HOpi (adapted to our syntax), we would
have
$\cpres x{\provide{\Delta}}y{} ( \sendho x{\abstr \rho P} \pp
\recv xpQ ) \reducesto Q\{\abstr \rho P/p\}$. In our setting,
using this kind of ``immediate'' substitution can be undesirable for two reasons. First, it is
impractical: in practice, substitutions are typically implemented by replacing subterms as needed
(possibly in a delayed fashion),
rather than blind immediate rewriting \citep{W71}. Second, it is theoretically inconvenient: in our
example
of cut reduction in multiplicative \CP, we could prove that the reductum is well-typed just by
looking at
the premises of the cut (more precisely, using them to write a new smaller proof); but here it
is not possible, since our premises do not talk about $Q\{\abstr \rho P/p\}$.

We deal with both issues by introducing a notion of explicit substitution for process variables.
The syntax of an explicit substitution is $Q \letp p{\abstr \rho P}$, read ``let $p$ be $\abstr
\rho P$ in $Q$''. We type explicit substitutions with the rule below.
\begin{displaymath}
\infer[\rname{Chop}]{
		\judge{\Theta,\Theta'}{Q \letp p{\abstr \rho P}}{\Gamma}
	}{
		\judge{\Theta}{P}{\Delta\rho}
		&
		\judge{\Theta',p:\Delta}{Q}{\Gamma}
	}
\end{displaymath}
Rule \rname{Chop} allows us to replace $p$ with $\abstr \rho P$ in $Q$, provided that $P$ and
$q$ have compatible typing (up to the formal named parameters). If you think in terms of
processes, \CHOP stands for ``Cut for Higher-Order Processes''. If you
think in terms of logic, \CHOP stands for ``Cut for Higher-Order Proofs''. The idea is that a
variable $p$ stands for a ``hole'' in a proof, which has to be filled as expected by the type for
$p$.

Observe that our previous proof for typing $\cpres x{\provide{\Delta}}y{} ( \sendho x{\abstr
\rho P} \pp
\recv xpQ )$ has exactly the same premises needed for rule $\rname{Chop}$. This means that we can
derive
the following reduction rule.
\[
\cpres x{\provide{\Delta}}y{} ( \sendho x{\abstr \rho P} \pp
\recv xpQ )
\reducesto
Q \letp p{\abstr \rho P}
\]

The reduction above models exactly what we would expect in a higher-order process calculus, but
using an explicit substitution. In \cref{sec:chop}, we will see that rule \rname{Chop} allows for
sound proof rewrites that yield the expected semantics of explicit substitutions for higher-order
processes. In particular, explicit substitutions can be propagated inside of terms, as illustrated
by the following example reduction.
\begin{equations}
(\recv xyQ) \letp p{\abstr \rho P} & \reducesto & \recv xy{(Q \letp p{\abstr \rho P})}
\end{equations}
Also, when a substitution reaches its target process variable, it replaces it.
\begin{equations}
\invoke p\rho \letp p{\abstr {\rho'} P} & \reducesto & P\{\rho \circ {\rho'}^{-1} \}
\end{equations}
Above, we abuse notation and interpret $\rho$ as functions that map labels to names, and
then $\circ$ is standard function composition.

This ends our overview of the intuition behind the journey from \CP to \CHOP. In the next
sections, we move to the formal presentation of \CHOP, its metatheory, extensions, and further
discussions.

\section{Classical Higher-Order Processes (\CHOP)}
\label{sec:chop}

We formally introduce the calculus of Classical Higher-Order Processes (\CHOP).
Technically, \CHOP is a strict generalisation of the calculus of Classical Processes (\CP) by
\citet{W14}. We refer to the latest version of \CP, given in \citep{CLMSW16}.

\paragraph{Processes}
In \CHOP, programs are processes ($P$,$Q$,$R$,$\ldots$) that communicate over channels
($x$,$y$,$\ldots$).
The syntax of processes is the following. Some terms include types ($A,B,C,\ldots$), which we
present
afterwards.
\begin{displaymath}
\begin{array}{rll}
P,Q,R ::=
& \send xyPQ & \text{output a channel} \\
& \recv xyP & \text{input a channel} \\
& \inl xP & \text{select left} \\
& \inr xP & \text{select right} \\
& \piCase{x}{P}{Q} & \text{offer a choice} \\
& \client{x}{y}{P} & \text{client request} \\
& \srv{x}{y}{P} & \text{server accept} \\
& \sendtype{x}{A}{P} & \text{output a type} \\
& \recvtype{x}{X}{P} & \text{input a type} \\
& \sendho{x}{\abstr \rho P} & \text{higher-order output} \\
& \recv xpP & \text{higher-order input} \\
& \invoke p\rho & \text{run} \\
& \close x & \text{close} \\
& \wait xP & \text{wait} \\
& \emptyCase x & \text{empty offer} \\
& \cpaxiom{x}{}{y}{A} & \text{link} \\
& \cpres{x}{A}{y}{}(P \pp Q) & \text{parallel composition} \\
& P \letp q{\abstr \rho Q} & \text{explicit substitution}
\end{array}
\end{displaymath}
We are already familiar with some terms, from the overview in \cref{sec:preview}. Here, we
briefly describe the meaning of each.

Term $\send xyPQ$ sends $y$ over $x$, and then continues as $P$ and $Q$ in parallel. Dually, term
$\recv xyP$ receives $y$ over $x$. In both these terms, $y$ is bound in the continuation $P$.
Term $\inl xP$ selects the left branch of an offer available over $x$. Likewise, $\inr xP$ selects
the
right branch. Dually, term $\piCase xPQ$ offers a choice between $P$ (left branch) and $Q$ (right
branch)
over $x$. Term $\client xyP$ sends $y$ to a server (an always-available replicated process)
available
over $x$, and then proceeds as $P$. Dually, term $\srv xyP$ is a replicated server waiting for
requests to
create new instances on $x$ (the instance will then communicate with the client over $y$). Name $y$
is
bound in both the client and server terms.
Terms $\sendtype{x}{A}{P}$ and $\recvtype xXP$ enable polymorphism.
Term $\sendtype{x}{A}{P}$ sends type $A$ over $x$ and proceeds as $P$; while term $\recvtype xXP$
receives a type over $x$ and then proceeds as $P$, where $X$ is bound in $P$.
Term $\sendho x{\abstr \rho P}$ sends the (process) abstraction $\abstr \rho P$ over $x$, where
$\rho$ is a record that maps the parameter names (ranged over by $l$) of the abstraction to free names in $P$. Dually,
term $\recv xpP$ receives an abstraction over $x$ and stores it in variable $p$, to be used in the
continuation $P$ ($p$ is bound in $P$). A term $\invoke p\rho$ runs the process stored inside of
variable $p$, instantiating its named parameters according to $\rho$.
Term $\close x$, or empty output, sends over $x$ the message that the channel is being closed and
closes it. Term $\wait xP$ waits for one of such close messages on $x$ and then continues as $P$.
A link term $\cpaxiom x{}yA$ acts as a forwarding proxy: inputs along $x$ are forwarded as outputs
to $y$ and vice versa, any number of times (as dictated by the protocol $A$).
The parallel composition term $\cpres{x}{A}{y}{}(P \pp Q)$ connects the channels $x$ at
process $P$ and $y$ at process $Q$ to form a session, enabling the two processes to communicate.
The names $x$ and $y$ are bound, respectively, in $P$ and $Q$. The type $A$ describes the behaviour
that the processes will follow when communicating.
Finally, the explicit substitution term $P \letp q{\abstr \rho Q}$ stores the abstraction $\abstr
\rho Q$ in variable $q$, and binds the latter in $P$.

$\alpha$-conversion works as usual. Following the intuition that invocation terms use named
parameters, bound names inside of $\rho$ in terms $\invoke p\rho$ are also affected by $\alpha$-conversion.
For example, these two processes are $\alpha$-equivalent: $\recvs xy.\invoke p{l = y}$ and
$\recvs xz.\invoke p{l = z}$.
Parameters are technically important: in the second process, if we did not rename $y$ to $z$
in the substitution, $y$ would become a free name but it was a bound name in the first process!
This is the same mechanism behind $\alpha$-equivalence for abstraction invocation in \HOpi, just extended to our named parameters.
We consider processes up to $\alpha$-equivalence in the remainder.

\paragraph{Types and environments}
There are two kinds of types in \CHOP. Session types, ranged over by $A, B, C, \ldots$, are used to 
type channels, and process types, ranged over by $\Gamma,\Delta,\ldots$, are used to type processes.

We inherit all the session types of \CP, which correspond to propositions in \CLL, and have also our
new ones for typing communication of processes. We range over
atomic propositions in session types with $X, Y$.
The syntax of types is given in the following, along with a brief description of
each case. We use $s$ to range over parameter names ($l$) and channels ($x$).
\begin{displaymath}
\begin{array}{rll}
A,B,C ::= & \gentensor A{}B & \text{send $A$, proceed as $B$} \\
& \genparr A{}B & \text{receive $A$, proceed as $B$}\\
& \genoplus A{}B & \text{select $A$ or $B$}\\
& \genwith A{}B & \text{offer choice between $A$ and $B$}\\
& \zero & \text{unit for $\oplus$}\\
& \top & \text{unit for $\with$}\\
& \one & \text{unit for $\tensor$}\\
& \bot & \text{unit for $\parr$}\\
& ?A & \text{client request}\\
& !A & \text{server accept}\\
& \exists X.A & \text{existential}\\
& \forall X.A & \text{universal}\\
& X & \text{atomic proposition}\\
& \dual X & \text{dual of atomic proposition}\\
& \provide{\Gamma} & \text{send process of type $\Gamma$} \\
& \assume{\Gamma} & \text{receive process of type $\Gamma$}
\\\\
\Gamma,\Delta ::= & s_1:A_1, \ldots, s_n:A_n & \text{process type}
\end{array}
\end{displaymath}

In $\exists X.A$ and $\forall X.A$, type variable $X$ is bound in $A$. We write $\m{ftv}(A)$ for
the set of free type variables in $A$. We ignore
the order of associations in process types (exchange is allowed). Whenever we write $\Gamma,\Delta$,
we assume that $\Gamma$ and $\Delta$ are disjoint (share no channel names). We write $(s_i:A_i)_i$ 
as an abbreviation of $s_1:A_1,\ldots,s_n:A_n$. Typing will ensure that all keys
in an environment are of the same kind (either parameter names or channels).

Just like \CP, \CHOP uses the standard notion of duality from linear logic to check that session
types are compatible. As described in \cref{sec:from_cp_to_chop}, we extend duality to deal with our
new session types, $\provide{\Gamma}$ and $\assume{\Gamma}$. Formally, we write
$\dual A$ for the type dual to $A$, defined inductively as follows.
\begin{displaymath}
\begin{array}{rcl@{\qquad}rcl}
\dual{(X)} &=& \dual X & \dual{(\dual X)} &=& X
\\
\dual{(\gentensor A{}B)} &=& \genparr{\dual A}{}{\dual B}
&
\dual{(\genparr A{}B)} &=& \gentensor{\dual A}{}{\dual B}
\\
\dual{(\genoplus A{}B)} &=& \genwith{\dual A}{}{\dual B}
&
\dual{(\genwith A{}B)} &=& \genoplus{\dual A}{}{\dual B}
\\
\dual\zero &=& \top & \dual\top &=& \zero
\\
\dual\one &=& \bot & \dual\bot &=& \one
\\
\dual{(?A)} &=& !A & \dual{(!A)} &=& ?A
\\
\dual{(\exists X.A)} &=& \forall X.\dual A
&
\dual{(\forall X.A)} &=& \exists X.\dual A
\\
\dual{\provide{\Theta;\Gamma}} &=& \assume{\Theta;\Gamma}
&
\dual{\assume{\Theta;\Gamma}} &=& \provide{\Theta;\Gamma}
\end{array}
\end{displaymath}
Duality is an involution, i.e., $\dual{(\dual A)} = A$.

A process environment $\Theta$ associates process variables to process types.
\begin{displaymath}
\begin{array}{rll}
\Theta ::= & p_1:\Gamma_1, \ldots, p_n:\Gamma_n & \text{process environment}
\end{array}
\end{displaymath}
We write $\cdot$ for the empty process environment (no associations). Similarly to process types, 
we allow for exchange in process environments and require all process variables in a process 
environment to be distinct.

\paragraph{Typing}
Typing judgements in \CHOP have the form $\judge{\Theta}{P}{\Gamma}$, which reads ``process $P$
uses process variables according to $\Theta$ and channels according to $\Gamma$''. We
omit $\Theta$ in judgements when it is empty ($\cdot$).
The rules for deriving judgements are displayed in \cref{fig:typing_rules}.
\begin{figure}[t]
  \begin{displaymath}
    \begin{array}{c}
      \infer[\rname{Axiom}]
      {\judge{}{\phl{\cpaxiom{x}{A^\perp}{y}{A}}}{\phl x:\thl {A^\bot}, \phl y:\thl{A}}}
      {}
      \qquad
      \infer[\rname{Cut}]
      {\judge{\Theta,\Theta'}{\phl{\cpres{x}A{y}{A^\perp}(P \pp Q)}}{\thl\Gamma,\thl\Delta}}
      {\judge{\Theta}{\phl P}{\thl\Gamma, \phl x:\thl A} & \judge{\Theta'}{\phl Q}{\thl\Delta,\phl
y:\thl{\dual{A}}}}
      \\\\
      \infer[\tensor]
      {
        \judge{\Theta,\Theta'}{\phl{\send xyPQ}}
        {\thl{\Gamma}, \thl{\Delta}, \phl x:\thl{A \tensor B}}
      }
      {
        \judge{\Theta}{\phl P}
        {\thl{\Gamma}, \phl y:\thl A}
        &
        \judge{\Theta'}{\phl Q}
        {\thl \Delta, \phl x: \thl B}
      }
      \qquad
      \infer[\parr]
      {\judge{\Theta}{\phl{\recv xyP}}
        {\thl{\Gamma}, \phl x: \thl {A \parr B}}
      }
      {\judge{\Theta}{\phl P}{\thl{\Gamma}, \phl y:\thl A, \phl x:\thl B}}
      \\\\
      \infer[\oplus_1]
      {\judge{\Theta}{\phl{\inl xP}}{\thl\Gamma, \phl x:\thl{A \oplus B}}}
      {\judge{\Theta}{\phl P}{\thl\Gamma, \phl x:\thl A}}
      \qquad
      \infer[\oplus_2]
      {\judge{\Theta}{\phl{\inr xP}}{\thl\Gamma, \phl x:\thl{A \oplus B}}}
      {\judge{\Theta}{\phl P}{\thl\Gamma, \phl x:\thl B}}
      \qquad
      \infer[\with]
      {\judge{\Theta}{\phl{\piCase xPQ}}{\thl \Gamma, \phl x:\thl{A \with B}}}
      {\judge{\Theta}{\phl P}{\thl\Gamma, \phl x:\thl A} &
      \judge{\Theta}{\phl Q}
        {\thl\Gamma, \phl
        x:\thl B}}
       \\\\
       \infer[?]
       {\judge{\Theta}{\phl{\client xyP}}{\thl \Gamma, \phl x:\thl {\query A}}}
       {\judge{\Theta}{\phl P}{\thl \Gamma, \phl y:\thl A}}
       \qquad
       \infer[!]
       {\judge{}{\phl{\srv xyP}}{\thl{\query \Gamma}, \phl x:\thl{\bang A}}}
       {\judge{}{\phl P}{\thl{\query \Gamma}, \phl y:\thl A}}
       \\\\
       \infer[\exists]
       {\judge{\Theta}{\phl{\sendtype xAP}}{\thl \Gamma, \phl x:\thl{\exists X.B}}}
       {\judge{\Theta}{\phl P}{\thl \Gamma, \phl x:\thl{B\{A/X\}}}}
       \qquad
       \infer[\forall]
       {\judge{\Theta}{\phl{\recvtype xXP}}{\thl\Gamma, \phl x:\thl{\forall X.B}}}
       {\judge{\Theta}{\phl P}{\thl\Gamma, \phl x:\thl B & \thl X \notin \m{ftv}(\thl \Theta) \cup
\m{ftv}(\Gamma)}}
       \\\\
       \infer[\rname{Weaken}]
       {\judge{\Theta}{\phl P}{\thl \Gamma, \phl x:\thl{\query A}}}
       {\judge{\Theta}{\phl P}{\thl \Gamma}}
       \qquad \infer[\rname{Contract}]
       {\judge{\Theta}{\phl{P\{x/y, x/z\}}}{\thl\Gamma, \phl x:\thl{\query A}}}
       {\judge{\Theta}{\phl P}{\thl\Gamma, \phl y:\thl{\query A}, \phl
         z:\thl{\query A}}}
      \\\\
      \infer[\one]
      {\judge{}{\phl{\close x}}{\phl x: \thl\one}}
      { }
      \qquad
      \infer[\bot]
      {\judge{\Theta}{\phl {\wait xP}}{\thl \Gamma, \phl x:\thl\bot}}
      {\judge{\Theta}{\phl P}{\thl\Gamma}}
       \quad\qquad
       \text{(no rule for $\zero$)}
      \qquad
      \infer[\top]
      {\judge{\Theta}{\phl{\emptyCase x}}{\thl{\Gamma}, \phl x:\thl\top}}
      {}
      \\\\
      \infer[\rname{Id}]
      {
      	\judge{p:\Gamma}{\invoke p\rho}{\Gamma\rho}
      }
      {}
      \qquad
      \infer[\rname{Chop}]
      {
		\judge{\Theta,\Theta'}{Q \letp p{\abstr \rho P}}{\Gamma}
      }
      {
        \judge{\Theta}{P}{\Delta\rho}
        &
        \judge{\Theta', p:\Delta}{Q}{\Gamma}
      }
      \\\\
      \infer[\rname{\provide{}}]
      {\judge{\Theta}{\sendho x{\abstr \rho P}}{x: \provide{\Gamma}}}
      {\judge{\Theta}{P}{\Gamma\rho}}
      \qquad
      \infer[\rname{\assume{}}]
      {\judge{\Theta}{\recv xpP}{\Gamma, x:\assume{\Delta}}}
      {\judge{\Theta,p:\Delta}{P}{\Gamma}}
      \end{array}
    \end{displaymath}
  \caption{\CHOP, Typing Rules.}
  \label{fig:typing_rules}
\end{figure}

Typing rules associate types to channels by looking at how the
channel is used in process terms, as expected. They should be unsurprising by now, after our
discussion in \cref{sec:from_cp_to_chop}. We just make a few observations for the reader unfamiliar
with \CP and/or linear logic. Rule \rname{Axiom} types a link between $x$ and $y$ by requiring that
the types of $x$ and $y$ are the dual of each other. This ensures that any message on $x$ can be
safely forwarded to $y$, and vice versa. All rules for typing channels enforce linear usage aside
for client requests (typed with the exponential connective $?$), for which contraction and
weakening are allowed. Contraction (rule \rname{Contract}) allows for having multiple client requests for the same
server channel, and weakening (rule \rname{Weaken}) allows for having a client that does not use a server.
Rule $!$ types a server, where $\query\Gamma$ denotes that all session types in $\Gamma$ must be client
requests, i.e., $\query \Gamma ::= x_1:\query A_1, \ldots, x_n: \query A_n$.
A server must be executable any number of times, since it does not know how many client requests it will
have to support. This is guaranteed by requiring that all resources used by the server are acquired
by communicating with the client (according to protocol $A$) and with other servers ($?\Gamma$).

The rules for typing higher-order terms have already been discussed in \cref{sec:from_cp_to_chop}.
We can now substantiate our remark on the design of rule $\provide{}$ made in the same section, 
which allows the sent abstraction to use process variables available in the context of the sender 
($\Theta$). It is well-known that restricting rule \rname{Axiom} to atomic propositions in 
\CLL does not reduce expressivity, because the general \rname{Axiom} can be reconstructed as an 
admissible rule. A cut-free derivation of $\judge{\ }{\cpaxiom{x}{\dual A}{y}{A}}{x:\dual A,y:A}$ 
corresponds to the $\eta$-expansion of the link term $\cpaxiom{x}{\dual A}{y}{A}$, and 
$\eta$-expansion is also a convenient tool for the elegant formulation of multiparty sessions in 
\CP, as shown by \citep{CLMSW16}.
That the general \rname{Axiom} is admissible is thus an important property of \CP, and preserving 
this property is desirable in \CHOP for the same reasons.
The following proof shows the new case of $\eta$-expansion for higher-order communications 
introduced by our new types, which yields the expansion
$\cpaxiom{x}{\assume{\Delta}}{y}{\provide{\Delta}} \reducesto
x(p).x[\abstr \rho {\invoke p \rho}]$ (where $\rho$ captures all the names in $\Delta$)
.
\[
\infer[\assume{}]{
	\judge{}{x(p).x[\abstr \rho {\invoke p \rho}]}{x:\assume{\Delta}, 
x:\provide{\Delta}}
}{
	\infer[\provide{}]{
		\judge{p:\Delta}{x[\abstr \rho {\invoke p \rho}]}{x:\provide{\Delta}}
	}{
		\infer[\rname{Id}]{
			\judge{p:\Delta}{\invoke p \rho}{\Delta\rho}
		}{}
	}
}
\]
Notice that without allowing the sent abstraction to use the linear process variables available to 
the sender, the derivation would be invalid.
This also shows that rule \rname{Id} is in harmony with our rules for higher-order input and 
output. We still have to prove that our rules are in harmony also with rule \rname{Cut}, as 
usual, in order to obtain cut elimination. We prove this in \cref{sec:semantics}, and establish a 
metatheory for \CHOP in \cref{sec:meta}.

\subsection{Example}
\label{sec:cloud_server}
We illustrate the expressivity of \CHOP by implementing a cloud server for running applications
that require a database.
The idea is that clients are able to choose between two options: run both the application and the
database it needs in the server, or run just the application in the server and connect it to an
externally-provided database (which we could imagine is run somewhere else in the cloud).

\paragraph{A cloud server}
The process for the cloud server follows. We assume that $A$ is the protocol (left
unspecified) that applications have to use in order to communicate with databases, and $l$
is the named parameter used by both abstractions to represent this shared connection
(the parameter names may be different, we do this only for a simpler presentation).
We also let applications and databases access some external services, e.g., loggers, through the parameters
$\til l$ and $\til m$ respectively. We write $\til{l = x}$ for $l_1 = x_1,\ldots,l_n = x_n$.

\[
\begin{array}{l@{\quad}l}
\srv{cs}{x}{\ \piCase{x}{
&
	\recvs{x}{x'}. \recvs {x}{app}. \recvs {x'}{db}. \cpres{z}{A}{w}{}(
    \invoke{app}{l=z,\til{l = u}} \pp
    \invoke {db}{l=w,\til{m = v}} )
}{
\\ &
	\recvs{x}{extdb}. \recvs{x}{app}. \cpres{z}{A}{w}{}( \invoke{app}{l=z,\til{l = u}} \pp \cpaxiom{extdb}{}{w}{\dual A} )
\quad } }
\end{array}
\]

The cloud server waits for client requests on channel $cs$. Then, it communicates with the client
on the
established channel $x$. It offers the options that we mentioned through a choice, respectively
with the
left and right branches. In the left branch, we first receive an auxiliary channel $x'$. We then
receive
the application $app$ on $x$ and the database $db$ on $x'$. Then, we compose and run $app$ and $db$
in
parallel, connecting them through the endpoints $z$ and $w$.
In the right branch, we first receive a channel $extdb$ for communicating with the external
database, and
then receive the application $app$. Then, we compose $app$ with a link term
$\cpaxiom{extdb}{}{w}{\dual A}$, which connects the application to the external database.

\paragraph{Typing the cloud server}
\newcommand{\passume}[1]{(#1)}
We now illustrate how to use our typing rules to prove that the cloud server is well-typed.
Let $\query \Gamma = (l_i:\query A_i)_i$, $\query\Delta = (m_i:B_i)_i$, $\rho = \{l=z,\til{l = u}\}$, and $\rho' = \{l=w,\til{m = v}\}$.
For readability, we first type the left and right branches in the choice offered through $x$.
Here is the proof for the left branch.
\begin{displaymath}
\infer[\parr]
{
	\judge{}
	{\recvs{x}{x'}. \recvs {x}{app}. \recvs {x'}{db}. \cpres{z}{A}{w}{}(
  \invoke{app}{\rho} \pp \invoke {db}{\rho'} )
  }
	{\query\Gamma\rho,\query\Delta\rho',
			x: 
			\assume{\query\Delta, l:\dual A}
			\parr
			\assume{\query\Gamma, l:A}}
}
{
	\infer[\assume{}]
	{
		\judge{}
		{\recvs {x}{app}. \recvs {x'}{db}. \cpres{z}{A}{w}{}( \invoke{app}{\rho} \pp \invoke {db}{\rho'} )}
		{\query\Gamma\rho,\query\Delta\rho',
			x': \assume{\query\Delta, l:\dual A},
			x: \assume{\query\Gamma, l:A}}
	}
	{
		\infer[\assume{}]
		{
			\judge{app:\passume{\query\Gamma, l:A}}
			{\recvs {x'}{db}. \cpres{z}{A}{w}{}( \invoke{app}{\rho} \pp \invoke {db}{\rho'} )}
			{\query\Gamma\rho,\query\Delta\rho',
			x': \assume{\query\Delta, l:\dual A}}
		}
		{
			\infer[\rname{Cut}]
			{
				\judge{app:\passume{\query\Gamma, l:A}, db:\passume{\query\Delta, l:\dual A}}
				{\cpres{z}{A}{w}{}( \invoke{app}{\rho} \pp \invoke {db}{\rho'} )}
				{\query\Gamma\rho,\query\Delta\rho'}
			}
			{
				\infer[\rname{Id}]
				{
					\judge{app:\passume{\query\Gamma, l:A}}
					{\invoke{app}{\rho}}
					{\query\Gamma\rho, z:A}
				}{}
				&
				\infer[\rname{Id}]
				{
          \judge{db:\passume{\query\Delta, l:\dual A}}
          {\invoke{db}{\rho'}}
          {\query\Delta\rho', w:\dual A}
				}{}
			}
		}
	}
}
\end{displaymath}

And this is the proof for the right branch.
\begin{displaymath}
\infer[\parr]
{
	\judge{}
	{\recvs x{extdb}.\recvs {x}{app}. \cpres{z}{A}{w}{}(
  \invoke{app}{\rho} \pp \invoke {db}{\rho'} )
  }
	{\query\Gamma\rho,
			x: 
			A
			\parr
			\assume{\query\Gamma, l:A}}
}
{
	\infer[\assume{}]
	{
		\judge{}
		{\recvs {x}{app}. \cpres{z}{A}{w}{}( \invoke{app}{\rho} \pp \cpaxiom{extdb}{A}{w}{\dual A} )}
		{\query\Gamma\rho,
			extdb: A,
			x: \assume{\query\Gamma, l:A}}
	}
	{
		\infer[\rname{Cut}]
			{
				\judge{app:\passume{\query\Gamma, l:A}}
				{\cpres{z}{A}{w}{}( \invoke{app}{\rho} \pp \cpaxiom{extdb}{A}{w}{\dual A})}
				{\query\Gamma\rho,extdb:A}
			}
			{
				\infer[\rname{Id}]
				{
					\judge{app:\passume{\query\Gamma, l:A}}
					{\invoke{app}{\rho}}
					{\query\Gamma\rho, z:A}
				}{}
				&
				\infer[\rname{Axiom}]
				{
		          \judge{}
    		      {\cpaxiom{extdb}{A}{w}{\dual A}}
        		  {extdb:A, w:\dual A}	
        		}{}
			}
	}
}
\end{displaymath}

Now that we have proofs for the two branches, we can use them to type the entire cloud server.
Let $\prf L$ and $\prf R$ be the two proofs above, respectively, and $P_L$ and $P_R$ the processes
that
they type (the left and right branches in the cloud server).
Then, we type the cloud server as follows.
\begin{displaymath}
\infer[\bang]
{
	\judge{}{\srv{cs}{x}{\ \piCase{x}{P_L}{P_R}\ }}{
	\query\Gamma\rho, \query\Delta\rho', cs:\bang \left(
		\left( 
			\assume{\query\Delta, l:\dual A}
			\parr
			\assume{\query\Gamma, l:A} \right)
		\with
		\left( A \parr \assume{\query\Gamma, l:A} \right)\right)}
}
{
	\infer[\with]
	{
		\judge{}{\piCase{x}{P_L}{P_R}}
		{\query\Gamma\rho, \query\Delta\rho', x:
		\left( 
			\assume{\query\Delta, l:\dual A}
			\parr
			\assume{\query\Gamma, l:A} \right)
		\with
		\left( A \parr \assume{\query\Gamma, l:A} \right)}
	}
	{
		\prf L
		&
		\prf R
	}
}
\end{displaymath}

The types for our cloud server follows the intuition that we initially
discussed for this example. The typing derivation also illustrates the interplay between 
our new features (process mobility and the usage of process variables) with the other
features of the calculus---in this example: channel mobility, exponentials (replicated services), 
choices, and links.

\paragraph{A generic improvement}
The code of our cloud server implementation would not depend on how clients and database
communicate, were it not for the hardcoded protocol $A$.
We can get rid of the hardcoded $A$ and reach a generic implementation using polymorphism.
Here is the improved implementation, where we $\boxed{\text{box}}$ the improvements.
\[
\begin{array}{l@{\quad}l}
\srv{cs}{x}{\ \boxed{x(X).}\ \piCase{x}{
&
	\recvs{x}{x'}. \recvs {x}{app}. \recvs {x'}{db}. \cpres{z}{\boxed{X}}{w}{}(
    \invoke{app}{\rho} \pp
    \invoke {db}{\rho'} )
}{
\\ &
	\recvs{x}{extdb}. \recvs{x}{app}. \cpres{z}{\boxed{X}}{w}{}( \invoke{app}{\rho} \pp 
\cpaxiom{extdb}{}{w}{\boxed{\dual X}} )
\quad } }
\end{array}
\]
In the improved cloud server, the client must also send us the protocol $X$ that the application
will use to communicate with the database. The cloud server is thus now generic, and the type of 
channel $cs$ is the following.
\[
\bang \left(\ \boxed{\forall X.}\ \left(
		\left( 
			\assume{\query\Delta, l:\boxed{\dual X}}
			\parr
			\assume{\query\Gamma, l:\boxed{X}} \right)
		\with
		\left( \boxed{X} \parr \assume{\query\Gamma, l:\boxed{X}} \right)\right)\right)
\]

Here is the proof.
\begin{displaymath}
\infer[\bang]
{
	\judge{}{\srv{cs}{x}{\ x(X).\ \piCase{x}{P_L}{P_R}\ }}{
	\query\Gamma\rho, \query\Delta\rho', cs:\bang \left( \forall X.\ \left(
		\left( 
			\assume{\query\Delta, l:\dual X}
			\parr
			\assume{\query\Gamma, l:X} \right)
		\with
		\left( X \parr \assume{\query\Gamma, l:X} \right)\right)\right)}
}
{
	\infer[\forall]{
		\judge{}{x(X).\ \piCase{x}{P_L}{P_R}}{
	\query\Gamma\rho, \query\Delta\rho', x: \forall X.\ \left(
		\left( 
			\assume{\query\Delta, l:\dual X}
			\parr
			\assume{\query\Gamma, l:X} \right)
		\with
		\left( X \parr \assume{\query\Gamma, l:X} \right)\right)}
	}{
		\infer[\with]
		{
			\judge{}{\piCase{x}{P_L}{P_R}}
			{\query\Gamma\rho, \query\Delta\rho', x:
			\left( 
				\assume{\query\Delta, l:\dual X}
				\parr
				\assume{\query\Gamma, l:X} \right)
			\with
			\left( X \parr \assume{\query\Gamma, l:X} \right)}
		}
		{
			\prf L\{X/A\}
			&
			\prf R\{X/A\}
		}
	}
}
\end{displaymath}

The proofs $\prf L\{X/A\}$ and $\prf R\{X/A\}$ used above are as $\prf L$ and $\prf R$
respectively, but wherever we had $A$ we now have $X$.
The interplay between type variables and process variables merits illustration, so we show $\prf
R\{X/A\}$ in full.
\begin{displaymath}
\infer[\parr]
{
	\judge{}
	{\recvs x{extdb}.\recvs {x}{app}. \cpres{z}{X}{w}{}(
  \invoke{app}{\rho} \pp \invoke {db}{\rho'} )
  }
	{\query\Gamma\rho,
			x: 
			X
			\parr
			\assume{\query\Gamma, l:X}}
}
{
	\infer[\assume{}]
	{
		\judge{}
		{\recvs {x}{app}. \cpres{z}{X}{w}{}( \invoke{app}{\rho} \pp \cpaxiom{extdb}{X}{w}{\dual X} 
)}
		{\query\Gamma\rho,
			extdb: X,
			x: \assume{\query\Gamma, l:X}}
	}
	{
		\infer[\rname{Cut}]
			{
				\judge{app:\passume{\query\Gamma, l:X}}
				{\cpres{z}{X}{w}{}( \invoke{app}{\rho} \pp \cpaxiom{extdb}{X}{w}{\dual X})}
				{\query\Gamma\rho,extdb:X}
			}
			{
				\infer[\rname{Id}]
				{
					\judge{app:\passume{\query\Gamma, l:X}}
					{\invoke{app}{\rho}}
					{\query\Gamma\rho, z:X}
				}{}
				&
				\infer[\rname{Axiom}]
				{
		          \judge{}
    		      {\cpaxiom{extdb}{X}{w}{\dual X}}
        		  {extdb:X, w:\dual X}	
        		}{}
			}
	}
}
\end{displaymath}

Observe the application of rule $\assume{}$ in the proof. If we read it bottom-up, we are moving
the type
variable $X$ from the typing of a channel in the conclusion---$x: X
			\parr
			\assume{\query\Gamma, l:X}$---to
the typing of a process in the premise---$app: ( \query\Gamma, l:X )$. This is then carried 
over to the application of \rname{Id}, which is thus able to type the usage of a process 
variable that is generic on the behaviour that will be enacted.

\subsection{Semantics}
\label{sec:semantics}

\newcommand{\names}{\m{n}}
\newcommand{\dom}{\m{dom}}

To give a semantics to \CHOP, we follow the approach that we outlined in \cref{sec:from_cp_to_chop}:
we derive term reductions (denoted $\reducesto$) and equivalences (denoted $\equiv$) from sound proof
transformations. We distinguish between principal reductions, which reduce cuts and chops that act
directly on their premises (e.g., compatible output and input terms), and reductions that arise
from commuting conversions, which just ``push'' cuts or chops up a proof.
The full proofs of principal reductions related to process mobility (the hallmark of \CHOP) are
displayed
in  \cref{fig:proof_red}, and a set of representative commuting conversions are displayed in
\cref{fig:proof_equiv} (the dashed lines are just visual separators between different commuting
conversions, for readability). The reader familiar with linear logic will recognise that principal reductions follow the same methodology of principal cut reductions in \CLL, but
now extended to our new cases and also applied to the principal cases that we get for our new rule \rname{Chop}.
\begin{figure}
\[
\begin{array}{r@{\!}c@{\!}l}
\begin{array}{l}
	\infer[\rname{Cut}]{
		\judge{\Theta,\Theta'}{\cpres x{\provide{\Delta}}y{} ( \sendho x{\abstr \rho P} \pp \recv 
xpQ
)}{\Gamma}
	}{
		\infer[\rname{\provide{}}]{
			\judge{\Theta}{\sendho x{\abstr \rho P}}{x:\provide{\Delta}}
		}{\judge{\Theta}{P}{\Delta\rho}}
	&
	\infer[\rname{\assume{}}]
	{
		\judge{\Theta'}{\recv xpQ}{\Gamma, y:\assume{\Delta}}
	}{
		\judge{\Theta', p: \Delta}{Q}{\Gamma}
	}
	}
\end{array}
&\reducesto&
\begin{array}{c}
	\infer[\rname{Chop}]{
		\judge{\Theta,\Theta'}{Q \letp p{\abstr \rho P}}{\Gamma}
	}{
		\judge{\Theta}{P}{\Delta\rho}
		&
		\judge{\Theta',p:\Delta}{Q}{\Gamma}
	}
\end{array}
\\\\
\begin{array}{c}
	\infer[\rname{Chop}]{
		\judge{\Theta}{\invoke p{\rho} \letp p{\abstr {\rho'}P}}{\Gamma\rho}
	}{
		\judge{\Theta}{P}{\Gamma\rho'}
		&
		\infer[\rname{Id}]
		{\judge{p:\Gamma}{\invoke p {\rho}}{\Gamma\rho}}{}
	}
\end{array}
&\reducesto&
\begin{array}{c}
\judge{\Theta}{P\{\rho \circ {\rho'}^{-1}\}}{\Gamma\rho}
\end{array}
\end{array}
\]
\caption{\CHOP, Principal Proof Reductions for Process Mobility.}
\label{fig:proof_red}
\end{figure}
\begin{figure}
\[
\begin{array}{l}
\begin{array}{l}
	\infer[\rname{Chop}]{
		\judge{\Theta,\Theta',\Theta''}{\left( \cpres x{A}y{} ( P \pp Q ) \right) \letp r{\abstr \rho R}}{\Gamma,\Gamma'}
	}{
		\judge{\Theta}{R}{\Delta\rho}
	&
		\infer[\rname{Cut}]
		{\judge{\Theta,r:\Delta}{\cpres x{A}y{} ( P \pp Q )}{\Gamma,\Gamma'}}
		{
			\judge{\Theta',r:\Delta}{P}{\Gamma, x:A}
			&
			\judge{\Theta''}{Q}{\Gamma',y:\dual A}
		}
	}
\end{array}
\reducesto
\\\\
\hspace{3cm}
\begin{array}{c}
	\infer[\rname{Cut}]
		{\judge{\Theta,\Theta',\Theta''}{\cpres x{A}y{} ( P\letp r{\abstr \rho R} \pp Q )}{\Gamma,\Gamma'}}
		{
			\infer[\rname{Chop}]{
				\judge{\Theta,\Theta'}{P \letp r{\abstr \rho R}}{\Gamma, x:A}
			}{
				\judge{\Theta}{R}{\Delta\rho}
				&
				\judge{\Theta',r:\Delta}{P}{\Gamma, x:A}
			}
			&
			\judge{\Theta''}{Q}{\Gamma',y:\dual A}
		}
\end{array}
\\\\
\arrayrulecolor{gray}
\hdashline
\\
\begin{array}{c}
\infer[\rname{Chop}]
{
	\judge{\Theta,\Theta',\Theta''}{\left( \send xyPQ \right) \letp r{\abstr \rho R}}{\Gamma,\Gamma',x:A\tensor B}
}
{
	\judge{\Theta}{R}{\Delta\rho}
	&
	\infer[\tensor]
	{
		\judge{\Theta',\Theta'',r:\Delta}{\send xyPQ}{\Gamma,\Gamma',x:A\tensor B}
	}
	{
		\judge{\Theta',r:\Delta}{P}{\Gamma,y:A}
		&
		\judge{\Theta''}{Q}{\Gamma',x:B}
	}
}
\end{array}
\reducesto
\\\\
\hfill
\begin{array}{c}
\infer[\tensor]
{
	\judge{\Theta,\Theta',\Theta''}{\send xy{P \letp r{\abstr \rho R}}{Q}}{\Gamma,\Gamma',x:A\tensor B}
}
{
	\infer[\rname{Chop}]
	{\judge{\Theta,\Theta'}{P\letp r{\abstr \rho R}}{\Gamma,y:A}}
	{\judge{\Theta}{R}{\Delta\rho} & \judge{\Theta',r:\Delta}{P}{\Gamma,y:A}}
	&
	\judge{\Theta''}{Q}{\Gamma',x:B}
}
\end{array}
\\\\
\hdashline
\\
\begin{array}{c}
\infer[\rname{Chop}]
{
	\judge{\Theta,\Theta'}{\left( \recv xyP \right) \letp r{\abstr \rho R}}{\Gamma, x:A \parr B}
}
{
	\judge{\Theta}{R}{\Delta\rho}
	&
	\infer[\parr]
	{
		\judge{\Theta', r:\Delta}{\recv xyP}{\Gamma, x:A \parr B}
	}
	{
		\judge{\Theta', r:\Delta}{P}{\Gamma, y:A, x:B}
	}
}
\end{array}
\reducesto
\\\\
\hfill
\begin{array}{c}
\infer[\parr]{\judge{\Theta,\Theta'}{\recv xy{\left( P \letp r{\abstr \rho R} \right)}}{\Gamma, x:A\parr B}}
{
	\infer[\rname{Chop}]{
		\judge{\Theta,\Theta'}{P \letp r{\abstr \rho R}}{\Gamma, x:A\parr B}
	}{
		\judge{\Theta}{R}{\Delta\rho}
		&
		\judge{\Theta', r:\Delta}{P}{\Gamma, y:A, x:B}
	}
}
\end{array}
\end{array}
\]
\caption{\CHOP, Representative Commuting Conversions for Explicit Substitutions.}
\label{fig:proof_equiv}
\end{figure}

The first reduction in \cref{fig:proof_red} shows how a process communication is reduced to an explicit
substitution of the transmitted process at the receiver. The second
reduction shows how an explicit substitution replaces a process invocation. The reductions in \cref{fig:proof_equiv} illustrate how explicit substitutions float inside of terms.

The full sets of reductions and equivalences supported by \CHOP are displayed in \cref{fig:chop_red} (principal reductions), 
\cref{fig:chop_axiom_red} ($\eta$-expansion), and \cref{fig:chop_equiv} (commuting conversions and
structural equivalences).
We use the following notation: $\fn(P)$ is the set of free
names (channels and process variables) in $P$; $\names(\Delta)$ is the set of names (actually 
labels, in the case of the $\eta$-expansion in \cref{fig:chop_axiom_red}) associated to types in 
$\Delta$; and $\dom(\rho)$ is the preimage of $\rho$ (the defined 
domain of $\rho$, interpreting it as a partial function, i.e., the set of labels associated to 
values in the record $\rho$).

We use a simplified form of the second reduction proven in \cref{fig:proof_red}: 
$\invoke p \rho \letp p{\abstr \rho P} \reducesto P$. This is obtained by assuming that 
$\alpha$-conversion is used on the abstraction $\abstr \rho P$ until all the names bound by the 
abstraction term ($\rho$) match exactly the names used by the invocation ($\invoke p \rho$).

An alternative to having $\eta$-expansion is to generalise our reduction
$\cpres{x}{X}{y}{}( \cpaxiom{w}{}{x}{X} \pp P ) \reducesto P \{ w/y \}$, which is restricted to 
atomic propositions $X$, to the form $\cpres{x}{A}{y}{}( \cpaxiom{w}{}{x}{A} \pp P ) \reducesto 
P\{w/y\}$, which is not. Both are sound in \CHOP, and the choice on which one should be adopted thus 
depends on other criteria. Here we choose to use the formulation based on $\eta$-expansion, because 
it will be useful later in our extension to multiparty sessions (\cref{sec:mchop}).

\begin{figure}
\begin{equationsl}
\phl{\cpres{x}{A\tensor B}{y}{\dual A\parr \dual B}
(\send x{x'}PQ  \pp \recv y{y'}R)}
&\reducesto &
\phl{\cpres{x'}{A}{y'}{\dual A}
(P  \pp \cpres xBy{\dual B} (Q\pp R))}
\\
\cpres{x}{A\oplus B}{y}{}
( \inl xP \pp \piCase{y}{Q}{R} )
&\reducesto&
\cpres{x}{A}{y}{}
( P \pp Q )
\\
\cpres{x}{A\oplus B}{y}{}
( \inr xP \pp \piCase{y}{Q}{R} )
&\reducesto&
\cpres{x}{A}{y}{}
( P \pp R )
\\
\phl{\cpres x{\query A}y{\bang A}(\client xuQ \pp \srv yvP)}
&\reducesto&
\phl{\cpres u{A}v{\dual A}(P \pp Q)}
\\
\phl{\cpres x{\query A}y{\bang A}(P \pp \srv yvQ)}
&\reducesto&
\phl{P}
\\
\phl{\cpres x{\query A}y{\bang A}(P \substtwo x{x'}x{x''} \pp \srv yvQ)}
&\reducesto&
\multicolumn{2}{l}{
\phl{\cpres {x'}{\query A}{y'}{\bang A}
\big( \cpres {x''}{\query A}{y''}{\bang A}
(P \pp \srv {y''}vQ) \pp \srv{y'}vQ \big)}
}
\\
\phl{\cpres x{\exists X.B}y{\forall X.\dual B}(\sendtype xA{P\pp \recvtype yXQ})}
&\reducesto&
\phl{\cpres x{B \substone AX}y{B \substone AX}(P \pp Q \substone AX)}
\\
& \makebox[0em][c]{\text{(no rule for $\zero$ with $\top$)}} &
\\
\phl{\cpres x{\one}y\perp(\close x\pp \wait yP)}
&\reducesto&
\phl{P}
\\
\cpres x{\provide{\Sigma;\Delta}}y{} ( \sendho x{\abstr \rho P} \pp \recv xpQ )
&\reducesto&
Q \letp p{\abstr \rho P}
\\
\invoke p \rho \letp p{\abstr \rho P} & \reducesto & P
\\
\cpres{x}{X}{y}{}( \cpaxiom{w}{}{x}{X} \pp P )
&\reducesto&
P \{ w/y \}
\end{equationsl}
\caption{\CHOP, Principal Reductions.}
\label{fig:chop_red}
\end{figure}
\begin{figure}
\[\begin{array}{rcl@{\quad}rcl}
\cpaxiom{x}{\dual A \parr \dual B}{y}{A \tensor B}
      &\reducesto&
      \phl{\recv xu{\send yv{\cpaxiom{u}{\dual A}{v}{A}}{\cpaxiom{x}{\dual B}{y}{B}}}} &
      \phl{\cpaxiom{x}{\perp}{y}{\one}}
      &\reducesto&
      \phl{\wait{x}{\close{y}}} \\
      \phl{\cpaxiom{x}{\dual A \with \dual B}{y}{A \oplus B}}
      &\reducesto&
      \phl{\piCase x{\inl y{\cpaxiom{x}{\dual A}{y}{A}}}{\inr y {\cpaxiom{x}{\dual B}{y}{B}}}} &
      \phl{\cpaxiom{x}{\top}{y}{\zero}}
      &\reducesto& \phl{\emptyCase x} \\
      \phl{\cpaxiom{x}{\bang \dual A}{y}{\query A}}
      &\reducesto&
      \phl{\srv xu{\client yv{\cpaxiom{u}{\dual A}{v}{A}}}} &
      \phl{\cpaxiom{x}{\forall X.\dual A}{y}{\exists X.A}}
      &\reducesto&
      \phl{\recvtype xX{\sendtype yX{\cpaxiom{x}{\dual A}{y}{A}}}}
      \\
      \cpaxiom{x}{\assume{\Sigma;\Delta}}{y}{\provide{\Sigma;\Delta}}
      &\reducesto&
      \recv xp{\sendho y{\abstr{\rho}{\invoke p\rho}}}\quad \mbox{ if } \dom(\rho) = \names(\Sigma) \cup \names(\Delta)
\end{array}\]
\caption{\CHOP, $\eta$-expansions.}
\label{fig:chop_axiom_red}
\end{figure}
\begin{figure}
\begin{equationsl}
\cpaxiom{x}{}{y}{A} &\equiv& \cpaxiom{y}{}{x}{\dual A}
\\
\cpres{z}{A}{w}{}(P \pp Q)
&\equiv&
\cpres{w}{\dual A}{z}{}(Q\pp P)
\\
\cpres{z}{A}{w}{}\left( \cpres{x}{B}{y}{} ( P \pp Q ) \pp R \right)
&\equiv&
\cpres{x}{B}{y}{}\left( P \pp \cpres{z}{A}{w}{}( Q \pp R ) \right)
\\
\cpres{z}{A}{w}{}(\send xyPQ \pp R)
&\reducesto&
\send xy{\cpres{z}{A}{w}{}(P \pp R)}{Q}
&
\text{if } z \in \fn(P)
\\
\cpres{z}{A}{w}{}(\send xyPQ \pp R)
&\reducesto&
\send xyP{\cpres{z}{A}{w}{}(Q \pp R)}q
&
\text{if } z \in \fn(Q)
\\
\cpres{z}{A}{w}{}(\recv xyP \pp Q)
&\reducesto&
\recvs xy. \cpres{z}{A}{w}{}(P \pp Q)
\\
\cpres{z}{A}{w}{}(\inl xP \pp Q)
&\reducesto&
\inl x{\cpres{z}{A}{w}{}(P \pp Q)}
\\
\cpres{z}{A}{w}{}(\inr xP \pp Q)
&\reducesto&
\inr x{\cpres{z}{A}{w}{}(P \pp Q)}
\\
\cpres{z}{A}{w}{}( \piCase xPQ \pp R )
&\reducesto&
\piCase x{\cpres{z}{A}{w}{}( P \pp R )}{\cpres{z}{A}{w}{}( Q \pp R )}
\\
\cpres{z}{A}{w}{}(\srv xyP \pp Q)
&\reducesto&
\srv xy{\cpres{z}{A}{w}{}(P \pp Q)}
\\
\cpres{z}{A}{w}{}(\client xyP \pp Q)
&\reducesto&
\client xy{\cpres{z}{A}{w}{}(P \pp Q)}
\\
\cpres{z}{A}{w}{}(\sendtype xBP \pp Q)
&\reducesto&
\sendtype xB{\cpres{z}{A}{w}{}(P \pp Q)}
\\
\cpres{z}{A}{w}{}(\recvtype xXP \pp Q)
&\reducesto&
\recvtype xX{\cpres{z}{A}{w}{}(P \pp Q)}
\\
\cpres{z}{A}{w}{}(\wait xP \pp Q)
&\reducesto&
\wait x{\cpres{z}{A}{w}{}(P \pp Q)}
\\
\cpres{z}{A}{w}{}( \emptyCase x \pp Q )
&\reducesto&
\emptyCase x
\\
\left( \cpres x{A}y{} ( P \pp Q ) \right) \letp r{\abstr \rho R}
&\reducesto&
\cpres x{A}y{} ( P \letp r{\abstr \rho R} \pp Q )
& \mbox{if } r \in\fn(P)
\\
\left( \cpres x{A}y{} ( P \pp Q ) \right) \letp r{\abstr \rho R}
&\reducesto&
\cpres x{A}y{} ( P \pp Q \letp r{\abstr \rho R} ) & \mbox{if } r \in \fn(Q)
\\
\left( \send xyPQ \right) \letp r{\abstr \rho R}
&\reducesto&
\send xy{P \letp rR}{Q}
& \mbox{if } r \in \fn(P)
\\
\left( \send xyPQ \right) \letp r{\abstr \rho R}
&\reducesto&
\send xy{P}{Q \letp r{\abstr \rho R}}
& \mbox{if } r \in \fn(Q)
\\
\left( \recv xyP \right) \letp r{\abstr \rho R}
&\reducesto&
\recv xy{\left( P\letp r{\abstr \rho R} \right)}
\\
(\inl xP)\letp r{\abstr \rho R}
&\reducesto&
\inl x{(P \letp r{\abstr \rho R})}
\\
(\inr xP)\letp r{\abstr \rho R}
&\reducesto&
\inr x{(P \letp r{\abstr \rho R})}
\\
\left( \piCase x{P}{Q} \right) \letp r{\abstr \rho R}
&\reducesto&
\piCase x{P \letp r{\abstr \rho R}}{Q \letp r{\abstr \rho R}}
\\
\left( \client xyP \right)\letp r{\abstr \rho R}
&\reducesto&
\client xy{(P \letp r{\abstr \rho R})}
\\
\left( \sendtype xAP \right)\letp r{\abstr \rho R}
&\reducesto&
\sendtype xA{(P \letp r{\abstr \rho R})}
\\
\left( \recvtype xXP \right)\letp r{\abstr \rho R}
&\reducesto&
\recvtype xX{(P \letp r{\abstr \rho R})}
\\
\left( \sendho x{\abstr {\rho'} P} \right)\letp r{\abstr \rho R}
&\reducesto&
\sendho x{\abstr {\rho'}{(P \letp r{\abstr \rho R})}}
\\
\left( \recv xpP \right)\letp r{\abstr \rho R}
&\reducesto&
\recv xp{(P \letp r{\abstr \rho R})}
\\
\left( \wait xP \right)\letp r{\abstr \rho R}
&\reducesto&
\wait x{(P \letp r{\abstr \rho R})}
\\
\multicolumn{3}{c}{\text{(no rule for $\zero$)}}
\end{equationsl}
\caption{\CHOP, Equivalences and Commuting Conversions.}
\label{fig:chop_equiv}
\end{figure}

In the remainder, we allow reductions to silently apply equivalences. We also
allow reductions to happen inside of cuts and chops, as formalised in the following.
\[
\begin{array}{rcl@{\qquad}l}
	\cpres{x}{A}{y}{}(P_1 \pp Q)
    & \reducesto & \cpres{x}{A}{y}{}(P_2 \pp Q)
    & \text{if } P_1 \reducesto P_2
    \\
    \cpres{x}{A}{y}{}(P \pp Q_1)
    & \reducesto & \cpres{x}{A}{y}{}(P \pp Q_2)
    & \text{if } Q_1 \reducesto Q_2
    \\
    Q_1 \letp p{\abstr \rho P}
    &\reducesto&
    Q_2 \letp p{\abstr \rho P}
    & \text{if } Q_1 \reducesto Q_2
\end{array}
\]

\subsection{Example, semantics}


Take the improved cloud server given in \cref{sec:cloud_server}, which we write here again for
convenience and call $P_{srv}$.
\[
\begin{array}{l@{~}l@{\quad}l}
P_{srv} \defeq &
\srv{cs}{x}{\ x(X).\ \piCase{x}{
&
	\recvs{x}{x'}. \recvs {x}{app}. \recvs {x'}{db}. \cpres{z}{X}{w}{}( \invoke{app}\rho \pp \invoke{db}{\rho'} )
}{
\\ &&
	\recvs{x}{extdb}. \recvs{x}{app}. \cpres{z}{X}{w}{}( \invoke {app}{\rho} \pp \cpaxiom{extdb}{}{w}
	{\dual{X}} ) \quad } }
\end{array}
\]

Here is a simple compatible client, $P_{cli}$. It uses the second option: connecting the
application in
the cloud server to an existing database. We run this database in the client itself ($P_{db}$).
This
makes sense in many practical scenarios, where the client wants to maintain ownership of its data
but also
to delegate heavy computational tasks to the cloud. Type $A$ is the protocol, left unspecified,
between
the application sent by the client ($P_{app}$) and its database.
\[
P_{cli} \defeq
\client {cc}{y}{
	\sendtype yA{
		\inr y{
			\send y{d}{
				P_{db}
			}{
				\sendho y{\abstr {\rho} P_{app}}
			}
		}
	}
}
\]

Process $P_{cli}$ implements on channel $cc$ (for cloud client) the dual type of that implemented
by
$P_{srv}$ on channel $cs$ (for cloud server), discussed in \cref{sec:cloud_server}. So we can
compose them
in parallel (we omit types at restrictions, for brevity).
\[
P_{sys} \defeq \cpres{cc}{}{~cs}{}( P_{cli} \pp P_{srv} )
\]
The semantics of our system $P_{sys}$ depends on whether $P_{db}$ uses channel $cc$ (to use the
cloud
server itself) or not. If it does, then we have to replicate $P_{srv}$, otherwise
we have to execute a single instance of it. Let us assume that we need a single instance ($P_{db}$
does
not use $cc$). Then we have the following reduction chain for the first three communications
(invocation
of the server, communication of the protocol $A$, right selection).
\[
P_{sys} \reducesto\reducesto\reducesto
\cpres{y}{}{x}{}
(
\send y{d}{P_{db}}{\sendho y{\abstr{\rho}{P_{app}}}}
\pp
\recvs{x}{extdb}. \recvs{x}{app}. \cpres{z}{A}{w}{}( \invoke{app}{\rho} \pp \cpaxiom{extdb}{}{w}{\dual{A}} )
)
\]
Notice that the server now knows the protocol $A$ between the application and the database.
A further reduction gives us the following process.
\[
\reducesto \cpres{d}{}{~extdb}{}\left(
	P_{db}
	\pp
	\cpres{y}{}{x}{}\left( \sendho y{\abstr \rho {P_{app}}} \pp
	\recvs{x}{app}. \cpres{z}{A}{w}{}( \invoke{app}{\rho} \pp \cpaxiom{extdb}{}{w}{\dual{A}} )
	\right)
\right)
\]
We can now reduce the communication between $y$ and $x$, so that now $P_{app}$ is moved from the
client
to server, and reduce the resulting explicit substitution at the server, as follows.
\[
\reducesto \reducesto \cpres{d}{}{~extdb}{}\left(
	P_{db}
	\pp
	\cpres{z}{A}{w}{}( P_{app} \pp \cpaxiom{extdb}{}{w}{\dual{A}} )
\right)
\]
So we now have that the database of the client communicates with the application in the server, as we originally intended.

\section{Metatheory}
\label{sec:meta}

We now move to the metatheoretical properties of \CHOP.

\subsection{Type preservation}
Since all reductions and equivalences in \CHOP are derived from type-preserving proof
transformations, we
immediately get a type preservation result for both.
\begin{theorem}[Type preservation]
\label{thm:type_preservation}
Let $\judge{\Theta}{P}{\Gamma}$. Then,
\begin{itemize}
\item $P\equiv Q$ for some $Q$ implies $\judge{\Theta}{Q}{\Gamma}$, and
\item $P \reducesto Q$ for some $Q$ implies $\judge{\Theta}{Q}{\Gamma}$.
\end{itemize}
\end{theorem}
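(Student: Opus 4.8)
The plan is to prove type preservation by a straightforward but careful case analysis, exploiting the fact that every reduction and equivalence in \CHOP was \emph{derived} from a proof transformation that, by construction, has the same conclusion in the source and target. The argument is essentially bookkeeping: for each rule in \cref{fig:chop_red}, \cref{fig:chop_axiom_red}, and \cref{fig:chop_equiv}, I would exhibit the two typing derivations (one for the redex, one for the reductum) and check they share the same $\Theta$ and $\Gamma$. Since the reductions were presented precisely as such pairs of derivations (see \cref{fig:proof_red} and \cref{fig:proof_equiv}), the core content of each case is already available; the theorem merely collects these observations into a single inductive statement.

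First I would set up the proof as a simultaneous induction on the structure of the reduction/equivalence derivation, handling the two bullets together. The base cases are the principal reductions and $\eta$-expansions: for each one, I invert the typing derivation of the left-hand side to recover the premises of the topmost rules (by inversion on the typing rules, which are syntax-directed), and then reassemble those premises into a derivation of the right-hand side with the identical conclusion $\judge{\Theta}{Q}{\Gamma}$. For instance, in the case $\cpres x{\provide{\Delta}}y{}(\sendho x{\abstr\rho P}\pp\recv xpQ)\reducesto Q\letp p{\abstr\rho P}$, inversion on \rname{Cut}, $\provide{}$, and $\assume{}$ yields exactly the premises $\judge{\Theta}{P}{\Delta\rho}$ and $\judge{\Theta',p:\Delta}{Q}{\Gamma}$ required by \rname{Chop}, as already displayed in \cref{fig:proof_red}. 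The commuting-conversion and structural-equivalence cases are analogous, and the cases of \cref{fig:proof_equiv} are fully worked examples.

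The inductive cases are the congruence rules that allow reductions inside cuts and chops: if $P_1\reducesto P_2$ then $\cpres{x}{A}{y}{}(P_1\pp Q)\reducesto\cpres{x}{A}{y}{}(P_2\pp Q)$, and similarly for the right component and for $Q_1\letp p{\abstr\rho P}$. Here I would invert the typing of the compound term to obtain a typing of the subterm being reduced, apply the induction hypothesis to replace it with an equally-typed subterm, and re-apply the surrounding rule (\rname{Cut} or \rname{Chop}) to recover the original conclusion. Since the surrounding rule splits $\Theta$ into $\Theta,\Theta'$ and concatenates the channel contexts, and the induction hypothesis preserves the subterm's own $\Theta$ and $\Gamma$ exactly, the reconstructed conclusion is unchanged. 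The clause allowing reductions to silently apply equivalences is dispatched by combining the equivalence bullet with the reduction bullet.

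The main obstacle I anticipate is not any single case but the \emph{sheer number} of cases together with the need to handle binding and renaming correctly, especially in the $\eta$-expansions and in the invocation-substitution reduction $\invoke p\rho\letp p{\abstr\rho P}\reducesto P$. The latter relies on the $\alpha$-conversion convention (stated in the text) that the bound names $\rho$ of the abstraction have been renamed to match the invocation, so the substitution $P\{\rho\circ{\rho'}^{-1}\}$ collapses to $P$; verifying that this renaming is type-preserving requires a small auxiliary observation that $\alpha$-equivalent processes have identical typing. Similarly, the $\forall$/$\exists$ reduction and the $\eta$-expansion for $\provide{}/\assume{}$ require care that the side condition $\dom(\rho)=\names(\Sigma)\cup\names(\Delta)$ makes \rname{Id} applicable with the correct $\Gamma\rho$. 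Beyond these, each case is mechanical, so I would present one representative principal reduction, one commuting conversion, and one congruence case in full, and relegate the remainder to the observation that they follow the same pattern.
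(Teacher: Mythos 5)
Your proposal is correct and takes essentially the same approach as the paper: the paper's own proof simply points to \cref{fig:proof_red} and \cref{fig:proof_equiv}, whose displayed pairs of derivations with identical conclusions are precisely the case-by-case check you describe, with all remaining cases declared ``similar''. Your explicit treatment of the congruence cases (induction on reductions under cuts and chops) and of $\alpha$-renaming in the invocation-substitution case merely spells out what the paper leaves implicit.
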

\begin{proof}
See \cref{fig:proof_red} and \cref{fig:proof_equiv} for the most interesting cases. The other cases
are
similar.
\end{proof}

\newcommand{\tostar}{\reducesto^*}

\subsection{Progress for cuts}
An important property of \CP is that all well-typed processes can make communications progress, in
the sense that all top-level cuts can eventually be eliminated by applying reductions
and equivalences. In \CHOP, we need to be more careful when studying progress. To see why,
consider the following process.
\[
\judge{p:(l:\one)}{\cpres{x}{\one}{y}{}( \invoke{p}{l=x} \pp \wait y{\close z} )}{z:\one}
\]
The process above is well-typed but stuck (cannot progress), because of the free process
variable $p$. In other words, we lack the code for the left participant in the cut.
Notice that here we are not stuck because of incompatible communication structures between
participants,
but rather because we still do not know the entirety of the code that we are supposed to execute.

In general, in \CHOP, we may be in a situation where some code is missing. But if we are not
missing any
code, then we should not have problems in executing all communications.
We call a process open if it has some free process variables, and closed if it has no free process
variables. We shall also write that a process $P$ is a
cut if its proof ends with an application of \rname{Cut} (so $P$ is a restriction term), and that
$P$ is a chop if its proof ends with
an application of \rname{Chop} (so $P$ is an explicit substitution term).
Then, \CHOP enjoys progress in the sense that any closed process that is a restriction can be
reduced.
To prove this, we first establish the following lemma, which states that chops cannot get us 
stuck.
\begin{lemma}[Progress for top-level chops]
\label{lemma:chop_progress}
Let $\judge{\Theta}{P}{\Gamma}$ be a chop. Then, there exists $Q$ 
such that $P \reducesto Q$ and $\judge{\Theta}{Q}{\Gamma}$.
\end{lemma}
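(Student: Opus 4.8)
The plan is to prove the lemma by a case analysis on the body of the explicit substitution, driven by the linearity of the process environment. Since $P$ is a chop, its derivation ends in \rname{Chop}, so $P = P_0 \letp p{\abstr \rho R}$ and the premises give $\judge{\Theta_1}{R}{\Delta\rho}$ and $\judge{\Theta_2, p:\Delta}{P_0}{\Gamma}$ with $\Theta = \Theta_1,\Theta_2$. The crucial observation is that $p$ lies in the \emph{linear} part of the environment of $P_0$, so it must actually be consumed by $P_0$; moreover, $P_0$ cannot be a term whose typing rule forces an empty process environment. I would run the induction on the size of $P$ and split on the head constructor of $P_0$.

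The principal case is when $P_0$ is a run term. By rule \rname{Id} the only process variable free in a run term is its own subject, and since $p:\Delta$ must be consumed we get $P_0 = \invoke p{\rho'}$. After $\alpha$-converting $\abstr \rho R$ so that its bound parameters coincide with $\rho'$, the principal reduction $\invoke p \rho \letp p{\abstr \rho R} \reducesto R$ of \cref{fig:proof_red} fires, delivering the reduct at once.

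For every other head constructor that can host $p$ --- channel input/output, selection, offer, client request, type communication, higher-order input/output, wait, and the multiplicative \rname{Cut} --- the matching commuting conversion in \cref{fig:chop_equiv} applies. For the binary multiplicative constructors ($\tensor$-output and \rname{Cut}) the two branches split $\Theta_2$, so linearity places $p$ in exactly one of them; this is precisely the side condition ($r \in \fn(\cdot)$) that decides which of the two conversions is enabled. For the additive offer $\piCase x{\cdot}{\cdot}$, whose premises share the environment, $p$ may occur in both branches and the conversion correctly duplicates the substitution into each. The constructors whose rules carry an empty process environment --- the server $\srv xyP$, the empty output $\close x$, and the link $\cpaxiom x{}yA$ --- cannot contain $p$ and are therefore vacuous, while the absorbing $\top$-term $\emptyCase x$ simply discards the substitution (the chop analogue of the $\top$ conversion for cuts). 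Finally, if $P_0$ is itself a chop, it is a strictly smaller chop than $P$; the induction hypothesis gives $P_0 \reducesto P_0'$, and the congruence rule for reductions inside chops yields $P \reducesto P_0' \letp p{\abstr \rho R}$.

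In each case we exhibit a reduct $Q$ with $P \reducesto Q$, and the typing $\judge{\Theta}{Q}{\Gamma}$ is then immediate from type preservation (\cref{thm:type_preservation}). The main obstacle I expect is keeping the case analysis genuinely exhaustive and correctly justified from the typing rules: for each head constructor one must argue either that a commuting conversion applies with a satisfiable side condition, that $P_0$ is forced to be the target run term, or that the constructor is excluded by the empty-environment discipline. The additive offer and the nested-chop case are where the naive ``slide the substitution down to the unique occurrence of $p$'' picture needs care --- duplication in the former, an appeal to the induction hypothesis in the latter.
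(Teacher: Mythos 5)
Your proof follows essentially the same route as the paper's: induction with case analysis on the right premise of the chop, the principal reduction of \cref{fig:proof_red} when that premise is a run term, the commuting conversions of \cref{fig:chop_equiv} for the other constructors, the induction hypothesis lifted through the congruence rule for nested chops, and \cref{thm:type_preservation} for the typing claim. Your version is in fact more careful than the paper's terse argument: the linearity reasoning that forces $P_0 = \invoke{p}{\rho'}$ in the \rname{Id} case, the vacuity of the empty-environment constructors ($\srv{x}{y}{P}$, $\close{x}$, links), and the discarding conversion for $\emptyCase{x}$ (which \cref{fig:chop_equiv} actually omits, though it is needed because rule $\top$ admits an arbitrary $\Theta$) are exactly the points the paper glosses over.
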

\begin{proof}
Because of type preservation (\cref{thm:type_preservation}), we only need to find a $Q$
such that $P \reducesto Q$.

We proceed by induction on the structure of $P$. The cases are on the last applied rule for the 
right premise of the chop.

If the right premise is itself a chop, we invoke the induction hypothesis and lift the 
corresponding reduction.

For all other cases, we can either eliminate the top-level chop immediately with a reduction from
\cref{fig:chop_red} or push the chop up the proof with a reduction from \cref{fig:chop_equiv}.
\end{proof}

Recall that we write $\judge{}{P}{\Gamma}$ when the process
environment in the judgement is empty.

\begin{theorem}[Progress for top-level cuts]
\label{thm:cut_progress}
Let $\judge{\ }{P}{\Gamma}$ and $P$ be a cut.
Then, there exists $Q$ such that $P \reducesto Q$ and $\judge{\ }{Q}{\Gamma}$.
\end{theorem}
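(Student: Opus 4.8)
The plan is to invoke \cref{thm:type_preservation} to reduce the goal to merely exhibiting some $Q$ with $P \reducesto Q$, since well-typedness of $Q$ then follows automatically. As $P$ is a cut, I write $P = \cpres{x}{A}{y}{}(P_1 \pp P_2)$ with $\judge{\ }{P_1}{\Gamma_1, x:A}$, $\judge{\ }{P_2}{\Gamma_2, y:\dual A}$, and $\Gamma = \Gamma_1, \Gamma_2$. The decisive role of the empty process environment is that rule \rname{Cut} can only split $\cdot$ as $\cdot,\cdot$, so both premises are closed; this both rules out the stuck configuration discussed just before \cref{lemma:chop_progress} and, because rule \rname{Id} needs a nonempty $\Theta$, guarantees that a bare run term $\invoke{p}{\rho}$ can never be a premise. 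I argue by induction on the size of $P$, case-analysing the head constructors of $P_1$ and $P_2$; the silent structural rules \rname{Weaken} and \rname{Contract} never appear as head constructors, so the analysis is driven by process syntax.

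First I dispatch the premises that are not plain communications. If $P_1$ is itself a cut, the induction hypothesis (applicable since $P_1$ is closed and smaller) gives $P_1 \reducesto P_1'$, and the congruence rule for reduction inside a cut lifts this to $P \reducesto \cpres{x}{A}{y}{}(P_1' \pp P_2)$; a chop premise is treated identically using \cref{lemma:chop_progress} in place of the induction hypothesis, and $P_2$ is symmetric via the congruence rule for the right component. If $P_1$ is a link on the cut channel, I reduce it directly: when $A$ is atomic the link reduction $\cpres{x}{X}{y}{}(\cpaxiom{w}{}{x}{X} \pp P_2) \reducesto P_2\{w/y\}$ fires (after orienting the link with $\cpaxiom{x}{}{y}{A} \equiv \cpaxiom{y}{}{x}{\dual A}$), and when $A$ is composite the matching $\eta$-expansion of \cref{fig:chop_axiom_red} rewrites the link into a concrete communication, which I again lift through the congruence rule.

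In the remaining cases both premises are communication prefixes with a definite subject channel. If the subject of $P_1$ lies in $\Gamma_1$ (is not the cut channel $x$), a commuting conversion from \cref{fig:chop_equiv} pushes that prefix above the cut and already constitutes a reduction of $P$; symmetrically for $P_2$. Note that the higher-order output rule $\rname{\provide{}}$ carries no side context, so a higher-order output can only occur with its subject equal to the cut channel and never needs commuting. The last situation is when $P_1$ acts on $x$ and $P_2$ acts on $y$: duality of $A$ and $\dual A$ forces the top type constructor on each side to be dual, hence the two prefixes complementary---output against input, selection against offer, type output against type input, higher-order output against higher-order input, or close against wait---so exactly one principal reduction from \cref{fig:chop_red} applies. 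The pairing $\top$ against $\zero$ cannot arise, since $\zero$ has no introduction rule and so $P_2$ could not exist.

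The step I expect to be the main obstacle is the exponential case, where the silent structural rules re-enter. When the cut type is $\query C$ on $x$ and $\bang C$ on $y$ with $P_2$ a server on $y$, the right reduction is selected by how $P_1$ uses $x$: if $x$ is not free in $P_1$ the weakening reduction discards the server; if $x$ occurs more than once the contraction reduction duplicates the server; if $x$ occurs exactly once as a client request the dereliction-style client/server reduction fires; and if that single request is buried under another prefix we fall back to the commuting case. Verifying that these alternatives are exhaustive and mutually exclusive, together with the underlying canonical-forms fact that every closed, non-cut, non-chop premise is either a link or a prefix with a uniquely determined subject, is where the bulk of the bookkeeping lies. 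Once it is discharged connective by connective, every branch yields a reduction and the induction closes.
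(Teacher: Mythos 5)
Your proposal is correct and follows essentially the same route as the paper's own proof: use type preservation (\cref{thm:type_preservation}) to reduce the goal to exhibiting a reduction, then case-analyse the cut's premises---recursing on a cut premise, invoking \cref{lemma:chop_progress} on a chop premise, applying a principal reduction from \cref{fig:chop_red} when both premises act on the cut channel, and a commuting conversion from \cref{fig:chop_equiv} otherwise. The extra detail you supply---the empty process environment forcing both premises to be closed (ruling out bare \rname{Id} terms), the link/$\eta$-expansion case, and the exponential bookkeeping with \rname{Weaken} and \rname{Contract}---fills in steps the paper leaves implicit rather than changing the argument.
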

\begin{proof}
Because of type preservation (\cref{thm:type_preservation}), we only need to find a $Q$
such that $P \reducesto Q$.

We now proceed by cases on the last applied rules of the premises of the cut.

If one of the premises is itself a cut, we recursively eliminate it (apply a reduction and lift it 
to the top-level cut).

If the premises are both applications of rules that act on the channels used in the
restriction, then we can apply one of the reductions in \cref{fig:chop_red}. (This may require 
using an equivalence.)

If one of the premises acts on a channel not named in the
restriction, then we apply one of the commuting conversions in \cref{fig:chop_equiv}.

If one of the premises is a chop, we apply Lemma~\ref{lemma:chop_progress} and lift the reduction 
to the cut.
\end{proof}

\subsection{General progress and execution strategy}

Establishing progress only for top-level cuts is not entirely satisfactory, because it does not
give us an execution strategy for \CHOP. What about processes with top-level chops? (Meaning
that an explicit substitution is at the top level.) For
example, consider the following term, where $\rho = \{l_1=x,l_2=y,l_3=z\}$.
\[
\invoke p{\rho} \letp{p}{\abstr{\rho}{\ \cpres{x}{\one}{y}{}( \close x \pp \wait y{\close z} )}}
\]
This process does not have any communications to perform at the top level, so if we just consider
top-level cut elimination we would not execute anything here.
However, intuitively we would expect this process to run $p$ by using the explicit substitution and
reduce as follows.
\[
\invoke p{\rho} \letp{p}{\abstr{\rho}{\ \cpres{x}{\one}{y}{}( \close x \pp \wait y{\close z} )}}
\reducesto \reducesto \close z
\]

So we need a more general progress result, which considers also chops.

\begin{theorem}[Progress]
\label{thm:progress}
Let $\judge{\ }{P}{\Gamma}$ be a cut or a chop. Then, there exists $Q$ such that $P \reducesto Q$
and
$\judge{\ }{Q}{\Gamma}$.
\end{theorem}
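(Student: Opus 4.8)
The plan is to prove the statement by a straightforward case analysis on the top-level structure of $P$, which by hypothesis is either a cut or a chop. In both cases the substantive work has already been carried out in the preceding results, so the present theorem amounts to assembling them. Moreover, since \cref{thm:type_preservation} guarantees that the resulting judgement is preserved, it again suffices to exhibit a $Q$ with $P \reducesto Q$.

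First I would treat the case in which $P$ is a cut. Here the process environment is empty by assumption, so the hypotheses of \cref{thm:cut_progress} are met verbatim; I would simply invoke it to obtain the required $Q$ with $P \reducesto Q$ and $\judge{\ }{Q}{\Gamma}$.

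Second, I would treat the case in which $P$ is a chop. The key observation is that \cref{lemma:chop_progress} was deliberately stated for an \emph{arbitrary} process environment $\Theta$, not merely the empty one; instantiating it with $\Theta = \cdot$ immediately yields $Q$ with $P \reducesto Q$ and $\judge{\ }{Q}{\Gamma}$. This is precisely the case that top-level cut progress (\cref{thm:cut_progress}) alone could not handle, as witnessed by the motivating example in which an explicit substitution sits at the top level with no cut to reduce.

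Since there is no deep obstacle, the only point worth stressing is why the chop case goes through for a \emph{closed} process: the generality of \cref{lemma:chop_progress} in $\Theta$ is exactly what makes the induction inside that lemma robust under recursion into a chop's right premise, where the environment grows, and it is the same generality that lets us reuse the lemma here at $\Theta = \cdot$. The genuine effort of the development thus lives in \cref{lemma:chop_progress} and \cref{thm:cut_progress}; the present theorem merely closes the gap left open by restricting progress to top-level cuts, unifying the two top-level shapes into a single statement that underpins an execution strategy for \CHOP.
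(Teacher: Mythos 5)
Your proof is correct and matches the paper's approach exactly: the paper proves this theorem as an immediate consequence of \cref{thm:cut_progress} (for the cut case) and \cref{lemma:chop_progress} instantiated at the empty environment (for the chop case). Your elaboration of why the lemma's generality in $\Theta$ matters is a faithful unpacking of the same argument, not a different route.
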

\begin{proof}
Immediate consequence of \cref{thm:cut_progress} and \cref{lemma:chop_progress}.
\end{proof}

\begin{remark}
The proof of progress gives us an execution strategy that reaps the benefits of using explicit substitutions.
Consider the following example (we omit the type of the restriction).
\[
\cpres{x}{}{y}{}
\left(
\send{x}{z}{
	\sendho{z}{\abstr \rho {P_{BIG}}}
}{
	\inl{x}{P}
}
\pp
\recvs y{w}.\recv wp{
	\ \piCase{y}{Q}{R}
}
\right)
\]
This can reduce as follows.
\[
\reducesto\reducesto
\cpres{x}{}{y}{}
\left(
\inl{x}{P}
\pp
\piCase{y}{Q\letp p{\ \abstr \rho {P_{BIG}}}}{R \letp p{\ \abstr \rho {P_{BIG}}}}
\right)
\]
Assume now that $P_{BIG}$ is some big process term, and that both $Q$ and $R$ use variable $p$.
The duplication of the explicit substitution that we have obtained with the reduction chain is
cheap in practice, since we could implement it by using a shared pointer to a representation of $P_{BIG}$.
This is safe, because the processes inside of explicit substitutions are ``frozen'' (cannot be
reduced).

However, if we chose to equip \CHOP with a semantics that resolves substitutions immediately, as in standard \HOpi, we would put the code of
$P_{BIG}$ in both $Q$ and $R$, leading to a costly duplication.
Instead, the execution strategy in the proof of \cref{thm:progress} is lazy with respect to
substitutions and prefers the following reduction.
\[
\reducesto
\cpres{x}{}{y}{}
\left(
P
\pp
Q\letp p{\ \abstr \rho {P_{BIG}}}
\right)
\]
In the term above, there is no risk of duplicating the code of $P_{BIG}$ unnecessarily anymore.
\end{remark}

\subsection{Chop elimination}

Rule \rname{Chop} can be seen as a new rule for composing processes, similar to rule \rname{Cut},
but acting on process variables instead of connecting channels.
Since linear logic supports cut elimination (all cuts can be removed from proofs), it is natural to
ask whether the same can be done for chops.
We prove here that \CHOP supports chop elimination. We say that $P$ is chop-free if its proof does
not contain any applications of rule \rname{Chop}.
Eliminating all chops from a proof corresponds to replacing all invocations of a process variable 
with the body of the process
specified by the corresponding chop that instantiates the variable.
This means that \CHOP can be seen as a logical reconstruction of \HOpi, where all substitustions 
are performed immediately after processes are communicated: explicit substitutions just give an 
implementation strategy for this mechanism.

We first establish that rule \rname{Chop} is admissible.
\begin{theorem}[Chop admissibility]
\label{thm:chop_adm}
Let $\judge{\Theta}{P}{\Delta\rho}$ and $\judge{\Theta',p:\Delta}{Q}{\Gamma}$, 
where $P$ and $Q$ are chop-free.
Then, there exists $R$ such that $\judge{\Theta,\Theta'}{R}{\Gamma}$ and $R$ is chop-free.
\end{theorem}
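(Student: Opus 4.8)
The plan is to build $R$ by running the chop to completion. First I would reassemble the two hypotheses into a single instance of rule \rname{Chop}, obtaining $\judge{\Theta,\Theta'}{Q \letp p{\abstr \rho P}}{\Gamma}$. By \cref{thm:type_preservation}, every reduct of this term retains the type $\Gamma$ under $\Theta,\Theta'$, so it suffices to exhibit a reduction sequence ending in a chop-free term and to let $R$ be that term; well-typedness then comes for free. The whole argument is thus a termination-and-shape analysis of the explicit substitution $\letp p{\abstr \rho P}$ as it propagates through $Q$.

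The core is an induction on the derivation of the right premise $\judge{\Theta',p:\Delta}{Q}{\Gamma}$, by cases on its last rule. Since $p$ is bound by the chop we have $p \notin \Theta$, so $P$ does not mention $p$; together with the hypothesis that $P$ and $Q$ are chop-free, this means the only chop present is the single top-level one, and the induction merely pushes it toward the leaves. In the base case the last rule is \rname{Id}: the context $\Theta',p:\Delta$ forces $\Theta' = \cdot$ and $Q = \invoke p{\rho'}$, so $\Gamma = \Delta\rho'$, and the principal reduction of \cref{fig:proof_red,fig:chop_red}, namely $\invoke p{\rho'} \letp p{\abstr \rho P} \reducesto P\{\rho' \circ \rho^{-1}\}$ (equivalently just $P$ after $\alpha$-aligning the records), eliminates the chop outright and yields a chop-free process. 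In every inductive case the last rule acts on a channel of $\Gamma$, or is \rname{Cut}; the chop commutes inward by the matching conversion of \cref{fig:chop_equiv}, I invoke the induction hypothesis on the strictly smaller subderivation(s) in which $p$ occurs, and I reassemble with the same last rule, which introduces no new chop.

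Two features of the \CHOP context discipline drive the inductive cases. In the multiplicative rules ($\tensor$ and \rname{Cut}) the process environment is split, so by linearity $p$ lies in exactly one premise (exactly one of the side conditions $p \in \fn(\cdot)$ in \cref{fig:chop_equiv} holds) and the chop commutes into that side. In the additive rule $\with$ the environment is shared, so $p$ occurs in both branches; the chop is duplicated into each, and the induction hypothesis is applied to both, each copy of $P$ staying chop-free. Conversely, rules \rname{Axiom} and $!$ carry an empty process environment, so $p$ can never sit beneath them and these cases do not arise, while the structural rules \rname{Weaken} and \rname{Contract} touch only channels and the chop commutes past them without incident. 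Chop-freeness of the final $R$ then follows because every transported copy of $P$ is chop-free, the surviving skeleton of $Q$ is chop-free, and no step creates an independent chop; termination is exactly the well-foundedness of the structural induction on $Q$ (each occurrence of $p$ is reached in finitely many commuting steps, and $P$ never reintroduces $p$).

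I expect the main obstacle to be bookkeeping rather than conceptual depth: checking that the conversions of \cref{fig:chop_equiv} are exhaustive over all last rules of $Q$ that can host a free $p$, correctly threading the environment split in \rname{Cut}/$\tensor$ and the duplication in $\with$, and making the renaming $\rho' \circ \rho^{-1}$ at the \rname{Id} leaves line up so that the substituted $P$ is indeed typed at $\Gamma = \Delta\rho'$. These are precisely the points where a careless induction could drop a case or misalign the formal parameters; once they are settled, \cref{thm:type_preservation} discharges all typing obligations automatically.
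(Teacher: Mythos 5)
Your proposal is correct and follows essentially the same route as the paper's proof: induction on the derivation of $\judge{\Theta',p:\Delta}{Q}{\Gamma}$, eliminating the chop at \rname{Id} leaves via the renaming $\rho' \circ \rho^{-1}$ (with $\alpha$-alignment of the records) and pushing it through every other last rule via the corresponding commuting conversion of \cref{fig:chop_equiv}. Your write-up is in fact more explicit than the paper's (which compresses all non-\rname{Id} cases into a single sentence), and your framing via term-level reduction plus \cref{thm:type_preservation} is only a cosmetic repackaging of the paper's direct proof transformations.
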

\begin{proof}
By induction on the structure of the proof of $\judge{\Theta',p:\Delta}{Q}{\Gamma}$. We 
proceed by cases on the last applied rule in this proof.

If the last applied rule is \rname{Id}, then $Q=\invoke p{\rho'}$ for some $\rho'$, 
$\Theta' = \cdot$, and $\Gamma = \Delta\rho'$.
Then $R = P \{ \rho' \circ \rho^{-1} \}$ and the thesis follows (using appropriate 
$\alpha$-renaming 
to avoid clashes between the free names used in $\rho'$ and those in $P$).

For all other cases, we proceed as for the corresponding commuting conversion in
\cref{fig:chop_equiv} (interpreting chop as an admissible rule) and then invoke the induction
hypothesis.
\end{proof}

The proof of chop admissibility gives us a methodology for eliminating all chops from proofs.
\begin{theorem}[Chop elimination]
\label{thm:chop_elim}
Let $\judge{\Theta}{P}{\Gamma}$. Then, there exists $Q$ such that $Q$ is chop-free and
$\judge{\Theta}{Q}{\Gamma}$.
\end{theorem}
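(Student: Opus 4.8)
The plan is to reduce \cref{thm:chop_elim} to the already-established admissibility of \rname{Chop} (\cref{thm:chop_adm}) by a straightforward structural induction on the derivation of $\judge{\Theta}{P}{\Gamma}$, rewriting the proof into an equivalent one that contains no application of \rname{Chop}.

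First I would set up the induction on the structure of the proof (equivalently, on the height of the derivation tree). In the base cases, and in every case where the last applied rule is \emph{not} \rname{Chop}, I would apply the induction hypothesis to each premise to obtain chop-free subderivations of the same judgements, and then reapply the very same rule. Since all premises become chop-free and the concluding rule is not \rname{Chop}, the resulting derivation of $\judge{\Theta}{Q}{\Gamma}$ is chop-free, as required. This step is purely mechanical: none of these rules interacts with chops, so the conclusion and its contexts are left unchanged.

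The only interesting case is when the last applied rule is \rname{Chop}. Here the derivation has premises $\judge{\Theta_1}{P'}{\Delta\rho}$ and $\judge{\Theta_2,p:\Delta}{Q'}{\Gamma}$, with $\Theta = \Theta_1,\Theta_2$. By the induction hypothesis applied to each of these strictly smaller subderivations, I obtain chop-free $P''$ and $Q''$ with $\judge{\Theta_1}{P''}{\Delta\rho}$ and $\judge{\Theta_2,p:\Delta}{Q''}{\Gamma}$. Now both hypotheses of \cref{thm:chop_adm} are satisfied, so chop admissibility yields a chop-free $R$ with $\judge{\Theta_1,\Theta_2}{R}{\Gamma}$, i.e.\ $\judge{\Theta}{R}{\Gamma}$; taking $Q = R$ closes this case and hence the induction.

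I expect essentially no new obstacle: the genuine difficulty---eliminating a single chop, in particular the base case \rname{Id} with its $\rho' \circ \rho^{-1}$ renaming and the commuting conversions that push a chop past the other rules---has already been discharged inside \cref{thm:chop_adm}. The only points requiring care are bookkeeping: checking that applying the induction hypothesis to the two premises of a \rname{Chop} is legitimate (both are subderivations, hence smaller, and admissibility requires precisely that they be chop-free, which the hypothesis supplies), and that the split of the process environment as $\Theta_1,\Theta_2$ is tracked so that the conclusion of admissibility recombines to exactly $\Theta$. An alternative formulation---inducting on the number of \rname{Chop} applications and repeatedly eliminating a topmost chop whose premises are already chop-free---yields the same result; I would prefer the structural induction above, since it lets me invoke \cref{thm:chop_adm} as a black box without having to argue separately that a topmost chop always exists.
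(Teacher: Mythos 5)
Your proof is correct and follows essentially the same route as the paper: both reduce \cref{thm:chop_elim} to chop admissibility (\cref{thm:chop_adm}), invoked only where the premises are already chop-free. The paper phrases this as iteratively eliminating the inner-most applications of \rname{Chop}, whereas you organize it as a structural induction on the derivation with \cref{thm:chop_adm} used as a black box in the \rname{Chop} case---precisely the alternative formulation you note at the end---so the difference is purely one of scaffolding.
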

\begin{proof}
We iteratively eliminate the inner-most applications of rule \rname{Chop}
in the proof of $P$, until there are no more chops. To eliminate these chops, we follow the
inductive procedure described in the proof of \cref{thm:chop_adm}, since inner-most applications
have chop-free premises.
\end{proof}

\section{Extensions}
\label{sec:extensions}

We illustrate how the basic theory of \CHOP can be used to obtain richer features.

\subsection{Derivable constructs as syntax sugar}
\label{sec:proc_output_wc}

We derive some constructs using the proof theory of \CHOP and provide them as syntactic sugar.

\paragraph{Output of free channel names}

In \CHOP, we can only output bound channel names, as in the internal $\pi$-calculus by \citet{S96}.
It is well-known that link terms can be used to simulate the output of a free channel. In \CP, this 
works, as shown by \citet{LM15}. The same applies to \CHOP. We can add the usual term for free 
channel output to the syntax of processes.

\begin{displaymath}
\begin{array}{rll}
P,Q,R ::= & \cdots & \\
& \sendfree{x}{y}{P} & \text{output a free channel name and continue}
\end{array}
\end{displaymath}

This new term is desugared as follows.

\[
\begin{array}{rcl}
\sendfree xyP &\defeq &
\send xz{ \cpaxiom{y}{A}{z}{\dual A} }{ P }
\end{array}
\]

The proof of soundness is easy, but reading it is useful to understand the typing of the new 
construct.

\begin{displaymath}
\begin{array}{c}
\infer[]{
	\judge{\Theta}{\sendfree xyP}{\Gamma, y:\dual A, x: A 
\tensor B}
}{
	\judge{\Theta}{P}{\Gamma, x: B}
}
\end{array}
\quad \defeq \quad
\begin{array}{c}
\infer[\tensor]{
	\judge{\Theta}{\send xz{ \cpaxiom{y}{A}{z}{\dual A} }{ P }}{\Gamma, y:\dual A, x: A 
\tensor B}
}{
	\infer[\rname{Axiom}]{
		\judge{}{\cpaxiom{y}{\dual A}{z}{A}}{y:\dual A, z: A}
	}{}
	&
	\judge{\Theta}{P}{\Gamma, x: B}
}
\end{array}
\end{displaymath}

\paragraph{Higher-order I/O with continuations}
We can also derive constructs for sending and receiving processes over channels and then allow us
to continue using that channel. We distinguish these sugared counterparts from our output and input
primitives by using bold brackets.
\begin{displaymath}
\begin{array}{rll}
P,Q,R ::= & \cdots & \\
& \sendhowc{x}{\abstr \rho P}{Q} & \text{output a process and continue}
\\
& \recvhowc{x}{p}{P} & \text{input a process and continue}
\end{array}
\end{displaymath}

These constructs are desugared as follows (we show directly the proofs).
\[
\begin{array}{rcl}
	\begin{array}{c}
		\infer{
			\judge{\Theta,\Theta'}{\sendhowc{x}
				{\abstr \rho P}{Q}}{\Gamma,x:\provide{\Delta}\tensor A}
		}{
			\judge{\Theta}{P}{\Delta\rho}
			&
			\judge{\Theta'}{Q}{\Gamma, x:A}
		}
	\end{array}
	&\defeq &
	\begin{array}{c}
	\infer[\tensor]{
		\judge{\Theta,\Theta'}{\send xy{\sendho y{\abstr \rho 
P}}{Q}}{\Gamma, x:\provide{\Delta}\tensor A}
	}{
		\infer[\provide{}]{
			\judge{\Theta}{\sendho y{\abstr \rho P}}{y:\provide{\Delta}}
		}{
			\judge{\Theta}{P}{\Delta\rho}
		}
		&
		\judge{\Theta'}{Q}{\Gamma, x:A}
	}
	\end{array}
\\\\
	\begin{array}{c}
		\infer{
			\judge{\Theta}{\recvhowc{x}{p}{P}}{\Gamma, x:\assume{\Delta} \parr A}
		}{
			\judge{\Theta, p:\Delta}{P}{\Gamma, x: A}
		}
	\end{array}
	&\defeq &
	\begin{array}{c}
	\infer[\parr]{
		\judge{\Theta}{\recvs xy.\recvs yp.P}{\Gamma, x:\assume{\Delta} \parr A}
	}{
		\infer[\assume{}]{
			\judge{\Theta}{\recvs yp.P}{\Gamma, y:\assume{\Delta}, x: A}
		}{
			\judge{\Theta, p:\Delta}{P}{\Gamma, x: A}
		}
	}
	\end{array}
\end{array}
\]

\paragraph{Procedures}
If the abstraction that we send over a channel does not refer to any free 
process variable, then we can always replicate it as many times as we wish. Here is the proof.
\[
\infer[\bang]{
	\judge{}{\srv xy{\sendho y{\abstr \rho P}}}
	{x:\bang \provide{\Gamma}}
}{
	\infer[\rname{\provide{}}]
      {\judge{}{\sendho y{\abstr \rho P}}{y: \provide{\Gamma}}}
      {\judge{}{P}{\Gamma\rho}}
}
\]
We use this property to build a notion of procedures that can be used at will. We denote procedure 
names with $K$, for readability. We will later use it as a channel name in our desugaring.
\begin{displaymath}
\begin{array}{rll}
P,Q,R ::= & \cdots & \\
& \m{def} \; K = \abstr{\rho}{P} \; \m{in}\; Q & \text{procedure definition}
\\
& \invoke K{\rho} & \text{procedure invocation}
\end{array}
\end{displaymath}
A term $\m{def} \; K = \abstr{\rho}{P} \; \m{in}\; Q$ defines procedure $K$ as $\abstr{\rho}{P}$ in 
the scope of $Q$, and a term $\invoke K{\rho}$ invokes procedure $K$ by passing the parameters 
$\rho$.

Here is the desugaring of both constructs. For simplicity of presentation, in term $\m{def} \; K = 
\abstr{\rho}{P} \; \m{in}\; Q$ we assume that $K$ is used at least once in $Q$. The generalisation 
to the case where $Q$ does not use $K$ at all is straightforward (thanks to rule \rname{Weaken}).

\[
\begin{array}{rcl}
	\begin{array}{c}
		\infer{
			\judge{\Theta}{\m{def} \; K = \abstr{\rho}{P} \; \m{in}\; Q}
				{\Gamma}
		}{
			\judge{}{P}{\Gamma\rho}
			&
			\judge{\Theta}{Q}{\Gamma, K:\query\assume{\Gamma}}
		}
	\end{array}
	&\defeq &
	\begin{array}{c}
	\infer[\rname{Cut}]{
		\judge{\Theta}{
			\cpres{x}{\bang \provide{\Gamma}}{K}{}\left(
			\srv xy{\sendho y{\abstr \rho P}}
			\pp
			Q
			\right)
		}{\Gamma}
	}{
		\infer[\bang]{
			\judge{}{\srv xy{\sendho y{\abstr \rho P}}}
			{x:\bang \provide{\Gamma}}
		}{
			\infer[\rname{\provide{}}]
			{\judge{}{\sendho y{\abstr \rho P}}{y: \provide{\Gamma}}}
			{\judge{}{P}{\Gamma\rho}}
		}
		&
		\judge{\Theta}{Q}{\Gamma, K:\query\assume{\Gamma}}
	}
	\end{array}
\\\\
	\begin{array}{c}
		\infer{
			\judge{}{\invoke K{\rho}}
			{\Gamma\rho,K:\query\assume{\Gamma}}
		}{}
	\end{array}
	&\defeq &
	\begin{array}{c}
	\infer[\query]{
		\judge{}{\client Ky{\recvs yp.\invoke p\rho}}
			{\Gamma\rho,K:\query\assume{\Gamma}}
	}{
		\infer[\assume{}]{
			\judge{}{\recvs Kp.\invoke p\rho}
			{\Gamma\rho,K:\assume{\Gamma}}
		}{
			\infer[\rname{Id}]{
				\judge{p:\Gamma}{\invoke p\rho}{\Gamma\rho}
			}{}
		}
	}
	\end{array}
\end{array}
\]

Observe that even if procedures can be used at will, typing ensures that each usage respects 
linearity (i.e., every usage ``consumes'' the necessary linear resources available in the context). Note also that self-invocations are not supported, as typing forbids them.

\paragraph{Higher-order parameters}
We chose not to make abstractions parametric on process variables for economy of the calculus.
This feature can be reconstructed with the following syntactic sugar.

\begin{displaymath}
\begin{array}{rll}
P,Q,R ::= & \cdots & \\
& x\lambda {q}.P & \text{named higher-order parameter}
\\
& \invoke P{x={\abstr{\rho} Q}} & \text{application}
\end{array}
\end{displaymath}

The desugaring is simple, interpreting a named higher-order parameter as a channel on which 
we perform a single higher-order input.

\[
\begin{array}{rcl}
	\begin{array}{c}
		\infer{
			\judge{\Theta}{x\lambda p.P}{\Gamma, x:\assume{\Delta}}
		}{
			\judge{\Theta,p:\Delta}{P}{\Gamma}
		}
	\end{array}
	&\defeq &
	\begin{array}{c}
	\infer[\assume{}]{
		\judge{\Theta}{\recvs{x}{p}.{P}}{\Gamma, x:\assume{\Delta}}
	}{
		\judge{\Theta,p:\Delta}{P}{\Gamma}
	}
	\end{array}
\\\\
	\begin{array}{c}
		\infer{
			\judge{\Theta,\Theta'}{\invoke P{x={\abstr{\rho} Q}}}{\Gamma}
		}{
			\judge{\Theta}{Q}{\Delta\rho}
			&
			\judge{\Theta'}{P}{\Gamma,x:\assume\Delta}
		}
	\end{array}
	&\defeq &
	\begin{array}{c}
	\infer[\rname{Cut}]{
		\judge{\Theta,\Theta'}{\cpres{y}{\provide{\Delta}}{x}{}}{\Gamma}
	}{
		\infer[\provide{}] {
			\judge{\Theta}{\sendho y{\abstr \rho Q}}{y:\provide{\Delta}}
		}{
			\judge{\Theta}{Q}{\Delta\rho}
		}
		&
		\judge{\Theta'}{P}{\Gamma,x:\assume\Delta}
	}
	\end{array}
\end{array}
\]

Notice that the desugaring of $\invoke P{x=\abstr \rho Q}$ yields a process that allows for 
reductions to happen in $P$ before the application takes place, since the corresponding 
higher-order named parameter term may be nested inside of $P$.
Also, this desugaring cannot be implemented by merely using an application of rule \rname{Chop}, 
since $p$ is \emph{bound} to $P$ in term $x\lambda p.P$ (as can be observed by its 
desugaring), and \rname{Chop} acts on free process variables.

\subsection{Multiparty sessions}
\label{sec:mchop}

In~\cite{CLMSW16,CMSY17}, \CP was extended to support multiparty sessions,
yielding a logical reconstruction of the theory of multiparty session types by \citet{HYC16}.
Multiparty session types allow for typing sessions with more than two
participants. A discussion of their
usefulness goes beyond the scope of this paper. The interested reader
may consult~\cite{Aetal16}. What we are interested in here
is to illustrate how \CHOP can support multiparty sessions.

The key to having multiparty sessions is to change the \rname{Cut} rule to the
following \rname{CCut} rule. We adapt the version from~\cite{CLMSW16} to our higher-order contexts. We write $\til x^{\til A}$ as an abbreviation of $x_1^{A_1}, \ldots, x_n^{A_n}$ (likewise for $\til P$, $\til \Gamma$).
\[
\infer[\rname{CCut}]
      {\phl{\gcpres{{\til x}^{\til A}}{G}{\til P}} \seq \thl{\til \Gamma}}
      {(\judge{\Theta_i}{\phl{P_i}}{\thl{\Gamma_i}, \phl{x_i}:\thl{A_i}})_i
        &
        \phl G \gseq (\phl{x_i}:\thl{{A_i}})_i}
\]
The differences between \rname{Cut} and the new \rname{CCut} (for Coherence
Cut) are: we are now composing in parallel an unbounded number of processes
$P_i$, each one implementing type $A_i$ at the respective channel $x_i$; and,
we use a new coherence judgement $\phl G \gseq (\phl{x_i}:\thl{{A_i}})_i$ to
check the compatibility of these types.
The type $G$ in the coherence judgement (and the restriction term) is a global type that prescribes
how
all participants in the session should communicate. The rules for deriving
these judgements for \CP are displayed in \cref{fig:coherence}. Coherence rules check that the
types of
participants can be matched accordingly to the global type, which is interpreted as a proof term.
\begin{figure}[t]
  \begin{displaymath}
    \begin{array}{c}
      \infer[\tensor\parr]
      {\ghl{\parrtensor{\til x}{y}{G}{H}}
        \gseq \thl \Gamma,  (\phl{x_i} : \thl{A_i \tensor B_i})_i, \phl{y}:\thl{C \parr D}}
      {\ghl G \gseq \thl (\phl{x_i}:\thl{A_i})_i, \phl y:\thl C & \ghl {H} \gseq \thl {\Gamma},
(\phl{x_i}:\thl{B_i})_i, \phl y:\thl D}
      \qquad
      \infer[\one\bot]
      {\ghl{\botone{\til x}{y}}
        \gseq (\phl {x_i}:\thl{\one})_i, \phl{y}:\thl{\bot}}
      {}
      \\[1ex]
      \infer[\oplus\with]
      {\ghl{\withplus{x}{\til y}{G}{H}}
        \gseq \thl{\Gamma}, \phl x:\thl{A \oplus B}, (\phl{y_i}:\thl{C_i \with D_i})_i}
      {  \ghl G  \gseq \phl \Gamma, \phl x:\thl A, (\phl {y_i}:\thl{C_i})_i
        & \ghl H \gseq \phl \Gamma, \phl x:\thl B, (\phl {y_i}:\thl{D_i})_i}
      \qquad
      \infer[\zero\top]
      {\ghl{\topzero{x}{\til y}}
        \gseq \thl \Gamma, \phl x:\thl \zero, (\phl{y_i}:\thl \top)_i}
      { }
      \\[1ex]
      \infer[\query\bang]
      {\ghl{\bangquery{x}{\til y}{G}}
        \gseq \phl x:\thl{\query A}, (\phl{y_i}:\thl{\bang B_i})_i}
      {\ghl G \gseq \phl x:\thl A, (\phl{y_i}:\thl{B_i})_i}
      \qquad
      \infer[\exists\forall]
      {\ghl{\forallexists{X}{x}{\til y}{G}}
        \gseq \thl \Gamma, \phl x:\thl{\exists X.A}, (\phl{y_i}:\thl{\forall X.B_i})_i}
      {  \ghl G  \gseq \thl \Gamma, \phl x:\thl A, (\phl{y_i}:\thl{B_i})_i
        & \thl X \notin \m{ftv}(\thl \Gamma)}
      \\[1ex]
      \infer[\textsc{Axiom}]
      {\ghl{\globalaxiom{x}{A}{y}{\dual{A}}} \gseq \phl x:\thl A, \phl y:\thl{\dual{A}}}
      {}
    \end{array}
  \end{displaymath}
  \caption{\CP, Coherence Rules (from~\cite{CLMSW16}).}
  \label{fig:coherence}
\end{figure}

Extending coherence to \CHOP is straightforward, as we just need to add a rule for our new type
constructors regarding process mobility. Here is the rule.
\[
\infer[\provide{}\assume{}]
{
\provideassume{x}{y}{\Delta}
\gseq x:\provide{\Delta}, y:\assume{\Delta}
}
{}
\]
And this is the principal reduction for process communication in the multiparty system.
\[
\gcpres{x^{\provide{\Delta}} y^{\assume{\Delta}}}{
	\provideassume{x}{y}{\Delta}
}{
	\sendho {x}{\abstr \rho P}
	\pp
	\recv {y}{p}{Q}
}
\reducesto
Q\letp p{\abstr \rho P}
\]
Deriving the other reductions and equivalences for \CHOP with multiparty communications is a
straightforward exercise in adapting the rules from~\cite{CLMSW16}.

Following the same idea used for building syntax sugar for process output with continuation at the
end of \cref{sec:proc_output_wc}, we could also develop syntax sugar for global types that allows
for continuations after process
communications, or even multicasts of process terms, by using auxiliary channel communications.

\section{From \CHOP to \CP}
\label{sec:translation}

\newcommand{\encho}[2]{\left\llbracket #1 \right\rrbracket_{#2}}
\newcommand{\enchopt}[2]{\encho{\ba{c}#1\ea}{#2}}

The original presentation of the $\pi$-calculus by \citet{MPW92} does not include process mobility, 
only channel
mobility. This was motivated by the belief that channel mobility can be used to
simulate the effects obtained by using process mobility. \citet{S93} later proved this belief to be 
correct, by showing
a fully-abstract encoding from \HOpi to the $\pi$-calculus. In this section, we show that a 
translation can be developed
also from \CHOP to \CP.
Namely, process terms in \CHOP, which may use process mobility, can be translated to terms in \CP, 
which
may not. We show that the translation supports type preservation and an operational correspondence.

The latest version of the calculus \CP, from~\cite{CLMSW16}, is equivalent to a fragment of \CHOP.
Formally, for this section, let \CP be the calculus obtained from \cref{fig:typing_rules} by
removing the rules \rname{Id}, $\provide{}$, $\assume{}$, and \rname{Chop} (the process mobility 
rules), and
also by removing process contexts ($\Theta$) from all other rules.

We now define a translation $\encho{P}{}$ from processes $P$ in \CHOP to \CP.
The key idea for the translation is that the output of a process--$\sendho x {\abstr \rho P}$---is
translated to sending a reference (a channel) to an instance of $P$, guarded by a series of inputs 
to receive the formal parameters ($\rho$). Dually, the input of a process receives such reference. 
Finally, the invocation of a process variable is translated to a series of outputs which provide the 
actual parameters.
Without loss of generality, we assume that there is an unused set of channels in $P$ indexed by 
process variables, i.e., the set of channels $\{ x^p \mid p \mbox{ is a process variable}\}$. 
Intuitively, $x^p$ is the channel used to encode the behaviour of the process stored in variable 
$p$.

The main rules that define the translation are given in \cref{fig:chop_to_cp_pm}, and follow the
reasoning described above. All the other rules are displayed in \cref{fig:chop_to_cp_other}. In the
rules, we abuse notation and define the translation as taking proof trees in \CHOP to process terms
in \CP. (Similarly to the presentation of the translation of functional programs in the calculus GV 
into \CP by \citet{W14}.)
This eliminates ambiguity on the distribution and usage of names, types, and environments.
We also represent parameter records $\rho$ as ordered according to lexicographic ordering on 
labels.

\begin{figure}
\begin{displaymath}
\begin{array}{r@{}c@{\ }ll}
\enchopt{
	\infer[\rname{Chop}]
	{
		\judge{\Theta,\Theta'}{Q \letp p{\abstr {\rho} P}}{\Gamma}
	}
	{
		\judge{\Theta}{P}{\Delta\rho}
		&
		\judge{\Theta', p:\Delta}{Q}{\Gamma}
	}
}{}
&=&
	\cpres{x^p}{\encho{\Delta}{}}{y^p}{} \left(
		\encho{Q}{}
		\pp
		\recvs {y^p}{z_1}.\cdots.\recvs{y^p}{z_k}.y^p().\encho{P}{}
	\right)
& \mbox{where } \rho = \{ \til{l=z} \}
\\\\
\enchopt{
	\infer[\rname{Id}]
	{
		\judge{p:\Gamma}{\invoke p{\rho}}{\Gamma\rho}
	}{}
}{}
&=&
	\sendfrees {x^p}{z_1}.\cdots.\sendfrees {x^p}{z_k}.\close {x^p}
& \mbox{where } \rho = \{ \til{l=z} \}
\\\\
\enchopt{
	\infer[\provide{}]{
		\judge{\Theta}{\sendho x{\abstr \rho P}}{x:\provide{\Gamma}}
	}{\judge{\Theta}{P}{\Gamma\rho}}
}{}
&=&
	\sendho xy.\left(
		\recvs y{z_1}.\cdots.\recvs y{z_k}.
		\wait y{\encho{P}{}}
	\pp \close x \right)
& \mbox{where } \rho = \{ \til{l=z} \}
\\\\
\enchopt{
	\infer[\assume{}]{
		\judge{\Theta}{\recv xpQ}{\Gamma,x:\assume{\Delta}}
	}{
		\judge{\Theta,p:\Delta}{Q}{\Gamma}
	}
}{}
&=&
\recv x{x^p}{\wait x{\encho{Q}{}}}
\end{array}
\end{displaymath}
\caption{Translating \CHOP to \CP, Rules for Process Mobility.}
\label{fig:chop_to_cp_pm}
\end{figure}

\begin{figure}
\begin{displaymath}
\begin{array}{c}
\enchopt{
	\infer[\rname{Axiom}]
	{\judge{}{\phl{\cpaxiom{x}{A^\perp}{y}{A}}}{\phl x:\thl {A^\bot}, \phl y:\thl{A}}}
	{}
}{}
=
\cpaxiom{x}{}{y}{\encho{A}{}}
\\\\
\enchopt{
	\infer[\rname{Cut}]
	{\judge{\Theta,\Theta'}{\phl{\cpres{x}A{y}{A^\perp}(P \pp Q)}}{\thl\Gamma,\thl\Delta}}
	{\judge{\Theta}{\phl P}{\thl\Gamma, \phl x:\thl A} & \judge{\Theta'}{\phl Q}{\thl\Delta,\phl
y:\thl{\dual{A}}}}
}{}
=
\cpres{x}{\encho{A}{}}{y}{}(\encho{P}{} \pp \encho{Q}{})
\\\\
\enchopt{
	\infer[\tensor]{
		\judge{\Theta,\Theta'}{\send xyPQ}{\Gamma,\Delta, x:A\tensor B}
	}{
		\judge{\Theta}{P}{\Gamma,y:A}
		&
		\judge{\Theta'}{Q}{\Delta,x:B}
	}
}{}
=
\send xy{\encho{P}{}}
{\encho{Q}{}}
\\\\
\enchopt{
	\infer[\parr]{
		\judge{\Theta}{\recv xyP}{\Gamma, x:A \parr B}
	}{
		\judge{\Theta}{P}{\Gamma, y:A, x:B}
	}
}{}
=
\recv xy{\encho{P}{}}
\\\\
\enchopt{
	\infer[\oplus_1]
	{\judge{\Theta}{\phl{\inl xP}}{\thl\Gamma, \phl x:\thl{A \oplus B}}}
	{\judge{\Theta}{\phl P}{\thl\Gamma, \phl x:\thl A}}
}{}
=
\inl x{\encho{P}{}}
\quad
\enchopt{
	\infer[\oplus_2]
	{\judge{\Theta}{\phl{\inr xP}}{\thl\Gamma, \phl x:\thl{A \oplus B}}}
	{\judge{\Theta}{\phl P}{\thl\Gamma, \phl x:\thl B}}
}{}
=
\inr x{\encho{P}{}}
\\\\
\enchopt{
	\infer[\with]
	{\judge{\Theta}{\phl{\piCase xPQ}}{\thl \Gamma, \phl x:\thl{A \with B}}}
	{\judge{\Theta}{\phl P}{\thl\Gamma, \phl x:\thl A} &
		\judge{\Theta}{\phl Q}
		{\thl\Gamma, \phl
		x:\thl B}}
}{}
=
\piCase x{\encho P{}}{\encho Q{}}
\\\\
\enchopt{
	\infer[?]
	{\judge{\Theta}{\phl{\client xyP}}{\thl \Gamma, \phl x:\thl {\query A}}}
	{\judge{\Theta}{\phl P}{\thl \Gamma, \phl y:\thl A}}
}{}
=
\client xy{\encho P{}}
\quad
\enchopt{
	\infer[!]
	{\judge{}{\phl{\srv xyP}}{\thl{\query \Gamma}, \phl x:\thl{\bang A}}}
	{\judge{}{\phl P}{\thl{\query \Gamma}, \phl y:\thl A}}
}{}
=
\srv xy{\encho P{}}
\\\\
\enchopt{
	\infer[\forall]
	{\judge{\Theta}{\phl{\recvtype xXP}}{\thl\Gamma, \phl x:\thl{\forall X.B}}}
	{\judge{\Theta}{\phl P}{\thl\Gamma, \phl x:\thl B & \thl X \notin \m{ftv}(\thl \Theta) \cup
		\m{ftv}(\Gamma)}}
}{}
=
\recvtype xX{\encho P{}}
\\\\
\enchopt{
	\infer[\exists]
	{\judge{\Theta}{\phl{\sendtype xAP}}{\thl \Gamma, \phl x:\thl{\exists X.B}}}
	{\judge{\Theta}{\phl P}{\thl \Gamma, \phl x:\thl{B\{A/X\}}}}
}{}
=
\sendtype x{\encho{A}{}}{\encho P{}}
\quad
\enchopt{
	\infer[\rname{Weaken}]
       {\judge{\Theta}{\phl P}{\thl \Gamma, \phl x:\thl{\query A}}}
       {\judge{\Theta}{\phl P}{\thl \Gamma}}
}{}
=
\encho P{}
\\\\
\enchopt{
	\infer[\rname{Contract}]
	{\judge{\Theta}{\phl{P\{x/y, x/z\}}}{\thl\Gamma, \phl x:\thl{\query A}}}
	{\judge{\Theta}{\phl P}{\thl\Gamma, \phl y:\thl{\query A}, \phl
         z:\thl{\query A}}}
}{}
=
\encho{P\{x/y, x/z\}}{}
\\\\
\enchopt{
	\infer[\one]
	{\judge{\Theta}{\phl{\close x}}{\phl x: \thl\one}}
	{ }
}{}
= \close x
\quad
\enchopt{
	\infer[\bot]
	{\judge{\Theta}{\phl {\wait xP}}{\thl \Gamma, \phl x:\thl\bot}}
	{\judge{\Theta}{\phl P}{\thl\Gamma}}
}{}
= \wait x{\encho P{}}
\\\\
\enchopt{
	\infer[\top]
	{\judge{\Theta}{\phl{\emptyCase x}}{\thl{\Gamma}, \phl x:\thl\top}}
	{ }
}{}
= \emptyCase x
\end{array}
\end{displaymath}
\caption{Translating \CHOP to \CP, Other Rules.}
\label{fig:chop_to_cp_other}
\end{figure}

The translation makes use of a type translation $\encho{A}{}$ from types in \CHOP to types in \CP,
since \CP does not have the types $\provide{\Delta}$ and $\assume{\Delta}$. The type translation 
is defined in
\cref{fig:type_translation}.

\begin{figure}
\begin{displaymath}
\begin{array}{rcl}
\encho{\provide{\Delta}}{} &=& \dual{\left(\encho{\Delta}{}\right)} \tensor \one
\\
\encho{\assume{\Delta}}{} &=& \encho{\Delta}{} \parr \bot
\\
\encho{(l_i:A_i)_i}{} &=& 
\encho{\dual{A_1}}{} \tensor \cdots
\tensor
\encho{\dual{A_n}}{}
\\\\
\multicolumn{3}{l}{
	\encho{A\tensor B}{} = \encho{A}{} \tensor \encho{B}{}
	\quad
	\encho{A \parr B}{} = \encho{A}{} \parr \encho{B}{}
}
\\
\multicolumn{3}{l}{
	\encho{A \oplus B}{} = \encho{A}{} \oplus \encho{B}{}
	\quad
	\encho{A \with B}{} = \encho{A}{} \with \encho{B}{}
}
\\
\multicolumn{3}{l}{
	\encho{\zero}{} = \zero
	\quad
	\encho{\top}{} = \top
	\quad
	\encho{\one}{} = \one
	\quad
	\encho{\bot}{} = \bot
	\quad
	\encho{X}{} = X
}
\\
\multicolumn{3}{l}{
	\encho{\query A}{} = \query \encho{A}{}
	\quad
	\encho{\bang A}{} = \bang \encho{A}{}
	\quad
	\encho{\exists X.A}{} = \exists X. \encho{A}{}
	\quad
	\encho{\forall X.A}{} = \forall X. \encho{A}{}
}
\end{array}
\end{displaymath}
\caption{Translation of types from \CHOP to \CP.}
\label{fig:type_translation}
\end{figure}

The translation $\encho{P}{}$ preserves types, up to usages of process variables, which are
translated
to usages of the corresponding channels that implement these variables. To state this formally, we
need
to translate types of process variables into channel types as follows.
\begin{displaymath}
\begin{array}{rcl}
\encho{\cdot}{} &=& \cdot
\\
\encho{\Theta,p:\Delta}{} &=& \encho{\Theta}{}, x^p:\encho{\Delta}{}
\end{array}
\end{displaymath}

\begin{theorem}
\label{thm:translation_type_preservation}
Let $\judge{\Theta}{P}{\Gamma}$ in \CHOP. Then, $\judge{}{\encho{P}{}}{\encho{\Theta}{},\encho{\Gamma}{}}$
in
\CP.
\end{theorem}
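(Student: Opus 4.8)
The plan is to argue by induction on the derivation of $\judge{\Theta}{P}{\Gamma}$ in \CHOP. This is the right induction because $\encho{\cdot}{}$ is defined on proof trees (\cref{fig:chop_to_cp_pm} and \cref{fig:chop_to_cp_other}): the last rule of the derivation selects a unique translation clause whose premises are translations of the immediate subderivations, to which the induction hypothesis applies. Before starting, I would record the auxiliary fact that the type translation of \cref{fig:type_translation} commutes with duality, $\encho{\dual A}{}=\dual{(\encho A{})}$, by a routine induction on $A$; the only genuinely new clauses are the mobility types, where both $\encho{\dual{\provide{\Delta}}}{}$ and $\dual{(\encho{\provide{\Delta}}{})}$ compute to $\encho\Delta{}\parr\bot$. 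This lemma is exactly what the \CP \rname{Cut} and \rname{Axiom} clauses need, since cutting on the translated type $\encho A{}$ requires the right premise to be typed at $\dual{(\encho A{})}=\encho{\dual A}{}$.

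For every rule other than the four process-mobility rules, the translation acts homomorphically on subterms and leaves the session-type structure unchanged, so each case closes immediately by reapplying the matching \CP rule to the translations of the premises (well-typed by the induction hypothesis), invoking the duality lemma only for \rname{Axiom} and \rname{Cut}. The process environment is threaded through as its channel encoding $\encho\Theta{}$, which maps each binding $p{:}\Delta$ to a fresh channel $x^p{:}\encho\Delta{}$, and it splits across premises precisely as $\Theta$ does in the original rule; \rname{Weaken} and \rname{Contract} translate to the same term at translated-but-identically-shaped exponential types (the latter using that the translation commutes with channel renaming).

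The substance of the proof is the four mobility cases, each of which I would discharge by exhibiting the \CP derivation of the translated term explicitly. The key is that a process type $\Delta=(l_i{:}A_i)_i$ is serialised by $\encho{\cdot}{}$ into the right-nested tensor $\encho{\dual{A_1}}{}\tensor\cdots\tensor\encho{\dual{A_n}}{}\tensor\one$, whose dual is the par $\encho{A_1}{}\parr\cdots\parr\encho{A_n}{}\parr\bot$. In \rname{Id}, the implementing channel $x^p$ carries $\encho\Delta{}$, and the $k$ free outputs $\sendfrees{x^p}{z_1}.\cdots.\sendfrees{x^p}{z_k}.\close{x^p}$ peel off the tensor factors one by one — each emitting $z_i$ at the dual type $\encho{A_i}{}$, which is exactly the entry demanded by $\encho{\Gamma\rho}{}$ — before $\close{x^p}$ discharges the trailing $\one$. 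Dually, in the $\provide{}$ and $\assume{}$ cases the receiver walks down the par via $\recvs y{z_1}.\cdots.\recvs y{z_k}$ and a wait, binding each $z_i$ at precisely the type at which the induction hypothesis types the translated body. The \rname{Chop} clause combines both: it becomes a \CP \rname{Cut} between $\encho Q{}$ (which uses $x^p{:}\encho\Delta{}$) and the provider $\recvs{y^p}{z_1}.\cdots.\recvs{y^p}{z_k}.y^p().\encho P{}$ (which implements $\dual{(\encho\Delta{})}$ and runs $\encho P{}$ once its parameters arrive), with the induction hypothesis typing both components and the duality lemma making the cut well-formed.

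I expect the main obstacle to be notational rather than conceptual: the symbol $\encho{\cdot}{}$ is overloaded, denoting the packed tensor/par when a process type occurs inside $\provide{}$, $\assume{}$, or a process-variable binding of $\encho\Theta{}$, but the entry-wise translation of session types when $\Gamma$ is the right-hand context of a judgement, as in the $\encho{\Gamma\rho}{}$ of \rname{Id}'s conclusion. Keeping these two readings apart, and checking in each mobility case that the sequence of I/O actions consumes the serialised type all the way down to its $\one$ or $\bot$ terminator while introducing the parameters $z_i$ at exactly the types the induction hypothesis requires, is the delicate bookkeeping. Once the serialisation convention is fixed (empty tensor $=\one$, empty par $=\bot$, so that \rname{Id}'s closing $\close{x^p}$ and the provider's wait line up with the terminators), every mobility case reduces to matching a list of sends or receives against a right-nested (co)tensor, and the result follows.
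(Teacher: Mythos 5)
Your proposal is correct and follows essentially the same route as the paper, whose proof is a one-line appeal to induction on the derivation, re-applying the \CP typing rules along the structure of the translated terms. The details you supply---the lemma that $\encho{\cdot}{}$ commutes with duality, and the convention that the serialised process type carries a trailing $\one$ (dually $\bot$) so that the terminating $\close{x^p}$, $\wait$, and $y^p()$ actions type-check---are exactly the bookkeeping the paper's terse proof leaves implicit, and your treatment of the four mobility cases matches what its translation clauses require.
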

\begin{proof}
By induction on the structure of $P$. For the proof cases, just apply the typing rules of
\CP by following the structure of the process terms on the right of the translation rules.
\end{proof}

Notice how the process variables that a process depends on in \CHOP become extra channels that the
translation needs in \CP ($\encho{\Theta}{}$), as expected. Of course, if the original \CHOP process does not
depend
on any free process variables, then the result of the translation does not depend on any such
extra channels, as formalised by the following corollary.

\begin{corollary}
\label{cor:closed_translation}
Let $\judge{\ }{P}{\Gamma}$ in \CHOP. Then, $\judge{\ }{\encho{P}{}}{\encho{\Gamma}{}}$ in \CP.
\end{corollary}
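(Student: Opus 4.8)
The plan is to derive the statement as the special case of Theorem~\ref{thm:translation_type_preservation} in which the higher-order context is empty, so essentially all the work has already been done. First I would observe that, by the convention stated in the presentation of the typing rules (we omit $\Theta$ from a judgement when it is empty), the hypothesis $\judge{\ }{P}{\Gamma}$ in \CHOP is literally the judgement $\judge{\cdot}{P}{\Gamma}$. This lets me apply Theorem~\ref{thm:translation_type_preservation} directly, instantiating $\Theta$ with the empty process environment $\cdot$, which yields $\judge{}{\encho{P}{}}{\encho{\cdot}{},\encho{\Gamma}{}}$ in \CP.

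Next I would unfold the definition of the environment translation at its base case, namely $\encho{\cdot}{} = \cdot$. Since the translated higher-order context thus contributes no channels to the \CP judgement, we have $\encho{\cdot}{},\encho{\Gamma}{} = \encho{\Gamma}{}$, and the desired conclusion $\judge{\ }{\encho{P}{}}{\encho{\Gamma}{}}$ in \CP follows immediately. The intuitive content, already remarked on in the surrounding text, is that a \CHOP process with no free process variables needs none of the auxiliary channels $x^p$ introduced by the translation to stand in for process-variable usages.

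I do not expect any genuine obstacle here: the corollary is a direct instantiation of the theorem combined with the base case of the environment translation, and the only thing to check is that the empty process environment translates to the empty channel context, which is immediate by definition. All the substantive inductive reasoning about the translation preserving types lives in the proof of Theorem~\ref{thm:translation_type_preservation} and need not be revisited.
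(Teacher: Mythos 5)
Your proof is correct and matches the paper's own (implicit) argument exactly: the corollary is stated as an immediate consequence of Theorem~\ref{thm:translation_type_preservation}, obtained by instantiating $\Theta$ with the empty environment and using $\encho{\cdot}{} = \cdot$ so that the translated context contributes no extra channels. Nothing further is needed.
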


\newcommand{\mto}{\reducesto^{*}}
\newcommand{\pto}{\reducesto^{+}}

Let $\mto$ be the transitive closure of $\reducesto$ and $\pto$ be defined as $\mto\reducesto$. The translation supports the following operational correspondence.

\begin{theorem}
$\judge{\Theta}{P}{\Gamma}$ implies the following properties.
\begin{itemize}
\item (Completeness) If $P \to P'$, then $\encho{P}{} \pto \encho{P'}{}$.
\item (Soundness) If $\encho{P}{} \mto Q$, then $P \pto P'$ for some $P'$ such that $Q \mto \encho{P'}{}$.
\end{itemize}
\end{theorem}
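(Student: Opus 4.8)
The plan is to prove the two directions separately, exploiting that $\encho{\cdot}{}$ is defined homomorphically on proof trees and keeping every intermediate \CP term well-typed via \cref{thm:translation_type_preservation}. For \emph{completeness} I would argue by induction on the derivation of $P\reducesto P'$, with one case per reduction axiom (the principal reductions of \cref{fig:chop_red}, the $\eta$-expansions of \cref{fig:chop_axiom_red}, and the commuting conversions of \cref{fig:chop_equiv}) together with the contextual rules that allow reducing inside a cut or a chop. The contextual cases are immediate: since the translation is homomorphic on \rname{Cut} and \rname{Chop}, a reduction of an immediate subterm is carried by the induction hypothesis to a non-empty \CP sequence under the same enclosing construct. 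The cases inherited verbatim from \CP (the $\tensor/\parr$, $\oplus/\with$, $\query/\bang$, $\exists/\forall$ and $\one/\bot$ reductions) map one-to-one, as the translation is the identity on those connectives; so only the two mobility reductions and the \rname{Chop} commuting conversions require genuine work.

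For the process-communication reduction of \cref{fig:chop_red}, which turns a cut on $\provide{\Delta}$ into $Q\letp p{\abstr\rho P}$, I would unfold both translations. The redex becomes a \CP cut on $\encho{\provide{\Delta}}{}=\dual{(\encho{\Delta}{})}\tensor\one$ between the reference-emitting term (which sends a fresh channel $y$ and offers $\recvs y{z_1}.\cdots.\recvs y{z_k}.\wait y{\encho P{}}$ on it while closing $x$) and the reference-receiving term $\recv x{x^p}{\wait x{\encho Q{}}}$; one $\tensor/\parr$ reduction followed by the $\one/\bot$ step on the reference channel yields exactly the \CP cut that the \rname{Chop} clause assigns to $Q\letp p{\abstr\rho P}$, up to cut-commutativity and $\alpha$-renaming of the reference channel. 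The invocation reduction $\invoke p\rho\letp p{\abstr\rho P}\reducesto P$ translates to a cut between $\sendfrees{x^p}{z_1}.\cdots.\sendfrees{x^p}{z_k}.\close{x^p}$ and $\recvs{y^p}{z_1}.\cdots.\recvs{y^p}{z_k}.y^p().\encho P{}$; reducing the $k$ parameter exchanges (each a free-name output whose desugaring resolves a forwarding link) and the final unit step arrives at $\encho P{}$. Finally, each \rname{Chop} commuting conversion maps to the corresponding \CP cut commuting conversion that pushes a cut past the same connective, a single \CP step. Hence every \CHOP step is matched by at least one \CP step, as $\pto$ demands.

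For \emph{soundness} I would classify the reductions available from a translation image into \emph{committing} reductions---the first visible communication of a mobility simulation (the reference exchange) or an ordinary inherited \CP redex---and \emph{administrative} reductions---the parameter exchanges, the reference-protocol unit steps, and the forwarding-link resolutions introduced by the free-output desugaring. The key observation is that administrative redexes are always causally guarded by their committing step (the sibling $\close x$, the wait on the received reference, and the links all become available only after the exchange that created them), so each administrative step has an identifiable parent \CHOP redex, and no administrative step belongs to a simulation whose commit has not fired. I would then establish that the administrative fragment is strongly normalising and confluent, and that completing the administrative steps trailing any committed redex lands on a translation image realising exactly one \CHOP reduction. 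Given $\encho P{}\mto Q$, completing the outstanding administrative steps of $Q$ reaches some $\encho{P'}{}$, while the simulations that were begun witness a non-empty \CHOP reduction $P\pto P'$; this yields $Q\mto\encho{P'}{}$ and closes soundness.

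I expect the soundness direction to be the main obstacle, and within it the treatment of \emph{partial} simulations: a \CP sequence may perform some but not all administrative steps of one \CHOP reduction, and may interleave them with unrelated reductions elsewhere. The burden is to show such partial states are never stuck, never expose a \CP redex without a \CHOP counterpart, and can always be driven forward to a clean image regardless of completion order. The free-channel-output sugar compounds this, since a single invocation unfolds into a cascade of forwarding-link cuts whose resolution---by the link reduction, or by $\eta$-expansion when the forwarded type is non-atomic---must be shown order-independent. Establishing this confluence-and-postponement property for the administrative fragment, and relating it precisely to the one-step \CHOP reductions, is where the real work lies; the remaining bookkeeping is routine given \cref{thm:type_preservation} and \cref{thm:translation_type_preservation}.
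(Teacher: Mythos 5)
You should know at the outset that the paper states this theorem with \emph{no proof at all} --- it is the only result in the paper given without even a sketch --- so there is nothing of the authors' to compare your proposal against; it has to stand on its own. Parts of it do stand: the contextual and \CP-fragment cases of completeness are as routine as you say, and your analysis of the process-communication redex is correct --- the translated cut on $\dual{(\encho{\Delta}{})}\tensor\one$ reaches the translation of $Q\letp p{\abstr \rho P}$ in one $\tensor/\parr$ step and one $\one/\bot$ step, up to cut symmetry and $\alpha$-renaming. The \rname{Chop} commuting conversions likewise map to single \CP cut conversions.

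The genuine gap lies in the forwarders created by the free-output sugar in the translation of \rname{Id}, and it is worse than the ``order-independence'' obligation you flag: those steps are provably \emph{not} order-independent, which breaks your soundness plan and appears to refute the soundness statement itself. Recall that in this paper a link reduces against a cut only at \emph{atomic} types, at compound types it can only $\eta$-expand, and reductions may occur inside cuts and chops but never under action prefixes. Take $\Gamma = l_1:\bot,\ l_2:\bot,\ l_3:\one$ and the one-step \CHOP reduction $\invoke p\rho \letp{p}{\abstr{\rho}{P_0}} \reducesto P_0$ where $P_0 = \wait{z_1}{\wait{z_2}{\close{z_3}}}$ and $\rho = \{l_1=z_1,l_2=z_2,l_3=z_3\}$. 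After the parameter exchanges and the closing $\one/\bot$ step, the translation is a nest of three cuts, each pairing a link $\cpaxiom{z_i}{}{w_i}{}$ against $\encho{P_0}{}$. The link for $z_2$ $\eta$-expands to $\wait{z_2}{\close{w_2}}$, whose first action is an input on the \emph{free} channel $z_2$; two commuting conversions then hoist $\wait{z_2}$ to the top of the whole term. The result has the form $\wait{z_2}{C}$: it is irreducible (nothing reduces under a prefix), yet it is not $\encho{P'}{}$ for any $P'$ reachable from the source, because the only \CHOP reduct is $P_0$ and $\encho{P_0}{}$ begins with $\wait{z_1}$. So the administrative fragment is \emph{not} confluent, ``completing the administrative steps lands on a translation image'' is false, and this reduction sequence is a counterexample to soundness as stated --- which may well be why the paper offers no proof. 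Completeness, by contrast, is salvageable, but only via a lemma your sketch treats as a one-liner: that a cut against a link reduces to the corresponding substitution \emph{provided} one interleaves $\eta$-expansions, commuting conversions and principal steps in the order dictated by the prefix structure of the partner process (i.e.\ the general link reduction is admissible as a multi-step reduction under a particular strategy). Any correct treatment must therefore either prove the correspondence only up to a relation that absorbs forwarders (a behavioural equivalence rather than $\mto$), or adopt the general link reduction $\cpres{x}{A}{y}{}( \cpaxiom{w}{}{x}{A} \pp P ) \reducesto P\{w/y\}$ that the paper explicitly declines, or restrict the commuting conversions; as written, your plan cannot be completed.
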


\section{Related Work}
\label{sec:related}

Since its inception, linear logic was described as the logic of concurrency \citep{G87}.
Correspondences between the proof theory of linear logic and variants of the $\pi$-calculus emerged soon
afterwards \citep{A94,BS94}, by interpreting linear logic propositions as types for channels. Later,
linearity inspired also the seminal theories of linear types for the $\pi$-calculus \cite{KPT99}
and session types \cite{HVK98}. Even though the two theories do not use exactly linear logic, the work by \citet{DGS17} shows that the link is still strong enough that session types can be encoded into linear types.

It took more than ten years for a formal correspondence between linear logic and (a variant of) session
types to emerge, with the seminal paper by \citet{CP10}. This then inspired the development of Classical Processes
(\CP) by \citet{W14},
which we have already discussed plentifully in this article. We have extended this line of work
to include process mobility. Process mobility has been studied deeply in the
context of the $\pi$-calculus, starting from the inception of the Higher-Order $\pi$-calculus
(\HOpi) by \citet{S93}. The definition of \HOpi does not require a typing discipline. Differently, 
the definition of \CHOP is based on a typing discipline that treats process variables 
linearly. Also, our semantics is not
defined a-priori as in most calculi, but rather derived by sound proof transformations allowed by our proof theory.
The proof theory of \CHOP generalises linear logic by allowing to assume that some judgements
can be proven, and to provide evidence for resolving these assumptions.
Similar ideas have been used in the past in different contexts, for example for modal
logic~\cite{NPP08} and logical frameworks~\cite{BS15}.

The concept of explicit substitution has been originally introduced to formalise
execution strategies for the $\lambda$-calculus that are more amenable to efficient
implementations \cite{ACCL91}. The proof theory of \CHOP naturally yields a theory of explicit 
substitutions for higher-order processes.
\cite{S93} used a similar syntax as syntactic sugar; its desugaring
is similar to that of our procedures in \cref{sec:extensions}.

The first version of multiparty session types is discussed in detail in~\cite{HYC16}.
Multiparty session types is currently a popular research topic,
and there is a substantial body of work on multiparty sessions. Our multiparty
version of \CHOP in \cref{sec:mchop} is the first that introduces process mobility to this line of
work, and illustrates how typing can be used to guarantee progress in this setting.
The first formulation of how multiparty sessions may be supported in the setting
of linear logic (without process mobility) was given in~\cite{CMSY15}, and later investigated
further in~\cite{CMSY17} and in~\cite{CLMSW16}.

Other session calculi include primitives for moving processes by relying on a functional
layer~\cite{TCP13,MY15}. Differently, \CHOP offers the first logical reconstruction of (a linear 
variant of) \HOpi~\cite{S93}, where mobile code is just processes, instead of functions (or values 
as intended in $\lambda$-calculus). The key difference with \citep{TCP13} is that \CHOP treats 
process variables linearly.
Of course, process abstractions and functions can be seen as equivalent ideas.
We chose the abstraction formulation because it allows us to use \CLL contexts as higher-order 
types directly, making the theory of \CHOP simpler.
For example, we do not require the additional asymmetric connectives in session types used
in~\cite{TCP13} for communicating processes ($\tau \supset A$ and $\tau \wedge A$).
The ``send a process and continue over channel $x$'' primitive found in~\cite{TCP13,MY15} can be
encoded in \CHOP, as shown in \cref{sec:proc_output_wc}. Since process variables are non-linear functional values in \cite{TCP13}, the usage of process variables does not follow a discipline of linearity as in \CHOP, offering less control. However, the functional layer in~\cite{TCP13}
allows for a remarkably elegant integration of recursive types, which we left out of the scope of this article. A potential direction to recover this feature is the work presented in~\cite{TCP14}. 
Our notion of explicit substitution for higher-order processes is new, and may
be adopted also in the settings of~\cite{TCP13} and~\cite{MY15} for devising efficient execution
strategies. The work in~\cite{TCP13} also has a logical basis, like ours. \CHOP has the advantage
of being formulated as an extension of classical linear logic, in a way that integrates well with
existing features. This allows us, for example, to achieve multiparty sessions, which at this time 
is still unclear how to do in the intuitionistic setting of~\cite{TCP13}. No encoding of 
higher-order processes to first-order processes is provided in~\cite{TCP13,MY15}.

The calculus of Linear Compositional Choreographies (LCC) \cite{CMS17}
gives a propositions as types correspondence for Choreographic Programming~\cite{M13:phd} based on 
linear logic. \CHOP may provide the basis for extending LCC with process mobility,
potentially yielding the first higher-order choreography calculus.

\citet{A17} investigated a notion of observational equivalence for \CP and a denotational semantics 
to capture it. His formulation requires the introduction of additional syntax and typing rules, in 
order to define parallel compositions that leave the names of connected channels free and include 
processes with empty behaviour ($\rname{Mix}_0$). 
Thus, we left an investigation of observational equivalence for \CHOP to future work, which would 
require extending Atkey's work to our higher-order constructs. We conjecture that extending the 
denotational semantics defined by Atkey can benefit from our translation of higher-order types as 
channel types. An immediate application would be to prove a full abstraction result for our 
translation from \CHOP to \CP---this would not be very surprising, given the structure of our 
translation, and also because the denotational semantics of higher-order types would probably be 
defined similarly to our type translation.

\section{Conclusions}
\label{sec:conclusions}

We presented the calculus of Classical Higher-Order Processes (\CHOP), an extension of Classical
Processes (\CP) \cite{W14} to mobile code.
Our formulation extends Classical Linear Logic (\CLL) to higher-order reasoning, viewing proofs as linear ``resources'' that can be used to assume premises in other proofs.
\CHOP integrates well with the existing features of linear logic, as we illustrated in 
\cref{sec:chop,sec:extensions}.

The translation from \CHOP to \CP shows how our higher-order processes that use mobile 
code can be simulated by using reference passing instead.
This result is distinct from the original one by \citet{S93}, because we are operating in a typed 
setting.
Understanding what a calculus with behavioural types, such as session types, can express---i.e., 
what well-typed terms can model---is a nontrivial challenge in general~\cite{P16}.
Our translation shows that the proof theory of \CP is powerful enough to simulate higher-order 
processes in the proof theory of \CHOP.
In practice, this has the usual implications: the translation gives us the possibility to write 
programs that use code mobility andthen choose later whether we should really use code mobility or 
translate it to an implementation basedon reference passing.
This choice depends on the application case. If we are modelling the transmission of an
application to be run somewhere else (as in cloud computing), then code mobility is necessary.
Otherwise, if we are in a situation where we can choose freely, then we should choose whichever 
implementation is more efficient. For example, code mobility is useful if two processes, say a 
client and a server, are operating on a slow connection; then, instead of performing many 
communications over the slow connection, the client may send an application to the server such that 
the server can communicate with
the application locally, and then send to the client only the final result.
Lastly, if we are using code mobility in an environment where communications are implemented in
local memory (as in many object-oriented language implementations or other emerging languages, like
Go), then the translation gives us a compilation technique towards a
simpler language without code mobility (\CP), which we can use to simplify runtime implementations.

Process mobility is the underlying concept behind the emerging interest on runtime adaptation, a mechanism by which processes can receive updates to their internal code at runtime.
Different attempts at formalising programming disciplines for runtime adaptations have been
made, e.g., by \citet{DP15,DGGLM16,BCDV17}, but none are rooted in a propositions as types correspondence and
all offer different features and properties.
\CHOP brings us one step nearer to formulating adaptable processes based on the firm foundations of linear logic.

\begin{acks}
The author thanks Lu\'\i s Cruz-Filipe, Marco Peressotti, and Davide Sangiorgi for useful 
discussions.
\end{acks}

\bibliographystyle{ACM-Reference-Format}
\bibliography{biblio}

\end{document}